\newtheorem{proposition}{Proposition}
\newtheorem{remark}{Remark}
\theoremstyle{definition}
\newtheorem{definition}{Definition}
\newcommand\mysymbol[3]{%
\protected\gdef#1{#2}%
\item[$#2$]#3}
\begin{document}
\title{Dynamic Connectivity Game for Adversarial Internet of Battlefield Things Systems}

\author{Nof~Abuzainab ~
        ~and~Walid~Saad,~\IEEEmembership{Senior~Member,~IEEE,}
         \vspace{-2.5ex}
\thanks{N. Abuzainab and W. Saad are with the department of Electrical and Computer Engineering, Virginia Tech, Blacksburg, VA, USA, e-mail: \{nof, walids\}@vt.edu}
\thanks{This research was sponsored by the Army Research Laboratory and was
accomplished under Grant Number W911NF-17-1-0021. The views and conclusions contained in this
document are those of the authors and should not be interpreted as representing the official policies, either
expressed or implied, of the Army Research Laboratory or the U.S. Government. The U.S. Government is
authorized to reproduce and distribute reprints for Government purposes notwithstanding any copyright
notation herein.}
\thanks{Copyright (c) 2012 IEEE. Personal use of this material is permitted. However, permission to use this material for any other purposes must be obtained from the IEEE by sending a request to pubs-permissions@ieee.org.}
}
\normalsize
\maketitle
\IEEEpeerreviewmaketitle

\begin{abstract}
In this paper, the problem of network connectivity is studied for an adversarial Internet of Battlefield Things (IoBT) system in which an attacker aims at disrupting the connectivity of the network by choosing to compromise one of the IoBT nodes at each time epoch.  To counter such attacks, an IoBT defender attempts to reestablish the IoBT connectivity by either deploying new IoBT nodes or by changing the roles of existing nodes.  This problem is formulated as a dynamic multistage Stackelberg connectivity game that extends classical connectivity games and that explicitly takes into account the characteristics and requirements of the IoBT network. In particular, the defender's payoff captures the IoBT latency as well as the sum of weights of disconnected nodes at each stage of the game. Due to the dependence of the attacker's and defender's actions at each stage of the game on the network state, the feedback Stackelberg solution (FSE) is used to solve the IoBT connectivity game.  Then, sufficient conditions under which the IoBT system will remain connected, when the FSE solution is used, are determined analytically. Numerical results show that the expected number of disconnected sensors, when the FSE solution is used, decreases up to $46\%$ compared to a baseline scenario in which a Stackelberg game with no feedback is used,
 and up to $43\%$ compared to a baseline equal probability policy. \vspace{-0.3 cm}
\end{abstract}



\section{Introduction}

\IEEEPARstart{T}{he} Internet of Things (IoT) is expected to revolutionize the military battlefield in various aspects \cite{IoBT, IoBTone, IoBTtwo}. By interconnecting all military units, including soldiers and vehicles, with various IoT devices, sensors, and actuators, the IoT provides autonomy in the battlefield and increases the efficiency of military networks.
An IoT-enabled battlefield will allow military commanders to acquire instanteneous information on the status of the military units. For instance, wearables can provide instant updates on the situation of soldiers, and sensors mounted on vehicles can provide real-time information on the status of each vehicle.
Another important IoT feature that makes it suitable for the battlefield is its support for mobile crowdsensing. In mobile crowdsensing,  various IoT devices such as handheld devices, wearables, vehicles, and sensors collaborate in sensing a particular type of information. In traditional military networks, on the other hand, dedicated sensors are deployed for each application.  Thus, a dense deployment of IoT devices can provide more accurate and detailed information about the battlefield, which can, in turn, allow building comprehensive situation awareness and enabling more accurate decision making. This imminent integration of the IoT with military networks forms the nexus of the so-called \emph{ Internet of Battlefield Things (IoBT)}\cite{IoBT}.

Naturally, in an IoBT, connectivity is very critical for the successful operation of the military network as it is essential to maintain the autonomy of the system. Military missions, such as  surveillance and situational awareness, will heavily rely on the information collected for the battlefield, and, thus, any disconnection in the IoBT system will result in inaccurate decision making and poor situational awareness. In fact, the IoBT is more vulnerable than commercial IoT networks due to the \emph{adversarial nature of the battlefield}, in which the devices are continuously subject to security attacks. Moreover, IoBT devices are typically small and low-cost devices that do not support strong security mechanisms, and, hence, they can be easily compromised by adversaries. The vulnerability of the IoBT devices necessitates the design of novel security solutions that are robust to adversaries and that can maintain the connectivity of the IoBT in adversarial settings.  

Connectivity reconstruction solutions were initially designed for wireless networks such as in \cite{wirelessconnectivity} in which the nodes select their transmission powers to maintain network connectivity. In \cite{cloudconnectivity} and \cite{starconnectivity}, connectivity establishment mechanisms are proposed to reestablish connectivity between sensors that were isolated, due to faults or attacks, and a central sink in a sensor network. In \cite{cloudconnectivity}, the connectivity problem is formulated as a single leader, multiple followers Stackelberg game in which a cloud acts as the leader and chooses to activate sleep nodes in order to maintain full connectivity, whereas the sleep nodes act as followers with each seeking to maximize the number of isolated nodes that it reconnects to the network. In \cite{starconnectivity}, stochastic geometry is used to design a relay-based connectivity recovery scheme for a wireless sensor network whose the goal is to optimize the tradeoff between the number of selected relays and the energy spent to restore connectivity.
In \cite{industrial}, the authors derive conditions for regional connectivity in an IoT industrial system while optimizing sensor coverage. The work In \cite{IoTjournal4} proposes a dynamic clustering and routing algorithm to maintain connectivity and achieve energy efficiency in a large scale sensor network.
In \cite{IoTjournal2}, a dynamic mobile-aware IoT topology control scheme, based on a potential game, is proposed in order to optimize IoT connectivity.
The work in \cite{IoTjournal1} introduces a resilience mechanism to maintain percolation-based connectivity in an IoT network in which an adversary seeks to attack highly connected IoT nodes in order to achieve the maximum possible damage. In the model of \cite{IoTjournal1},  the IoT nodes report a one bit estimate of their attack status to a common fusion center. Then, the objective of the fusion center is to choose the nodes to survey such that the number of nodes with highest degree under attack is kept below a required threshold. The problem is formulated as a zero-sum game between the fusion center and the attacker.
In \cite{IoTjournal3}, a time-reversal scheme is proposed in an IoT network to enable connectivity between devices with heterogeneous bandwidth requirements.

However, most of these existing works \cite{wirelessconnectivity,cloudconnectivity,starconnectivity,industrial, IoTjournal4, IoTjournal2, IoTjournal1, IoTjournal3} consider the connectivity problem in conventional sensor networks in which all the nodes are simple sensors of the same type and capabilities, whereas in the IoBT, the nodes can have heterogeneous roles and capabilities. In fact, each IoBT device can possess multiple sensors each of which is collecting different types of information. Thus, the importance of each device is dependent on the number of types of information it is sensing.  Further, the IoBT will integrate high end nodes, commonly known as sinks, that collect the different information from the IoBT devices and perform complex operations in order to obtain useful information needed by the military commanders \cite{IoBT, IoBTone, IoBTtwo}. Thus, the effect of disconnection on the IoBT depends on the type of the node that gets isolated from the network. Further, prior art such as in \cite{wirelessconnectivity,cloudconnectivity,starconnectivity,industrial, IoTjournal4, IoTjournal2, IoTjournal1, IoTjournal3} does not adequately capture the dynamics of interaction between defenders and adversaries in a battlefield. Thus, there is a need to introduce new dynamic connectivity solutions that consider the heterogeneity of the IoBT nodes and dynamically adapt to the actions of adversaries in the battlefield.

The main contributions of this paper are summarized next:
\begin{itemize}
 \item We develop a novel adaptive framework for dynamically optimizing the connectivity of an adversarial IoBT network. 
In particular, we consider the connectivity problem in an IoBT that includes a set of heterogeneous devices that sense different types of information. The IoBT devices must transmit their information, through intermediary local sinks, to the general sink. We consider an adversarial IoBT in which an attacker is interested in causing disconnection to the network by choosing to compromise one of the IoBT nodes at each time epoch.  Meanwhile, the IoBT operator acts as a defender that strives to maintain the connectivity of the IoBT network by either deploying new IoBT nodes or changing the roles of the nodes.
The objective of the attacker and the defender is to maximize their sum of payoffs until the end of the military operation.
\item We formulate the connectivity problem in the IoBT using the framework of  \emph{connectivity games}  \cite{Connectivity} which are game-theoretic frameworks suitable for addressing problems that involve the maintenance and restoration of a network in  presence of adversaries. 
However, in classical connectivity games, the sole objective is to restore or maintain the network connectivity, whereas in the IoBT, there are other performance metrics that must be considered such as the latency of communication. Thus, we propose a novel IoBT connectivity game that is tailored to the characteristics and requirements of the IoBT.
In particular, the attacker's payoff is expressed as the sum of weights of disconnected nodes minus the cost of compromising a node. The defender's payoff, on the other hand, is expressed as the utility of deploying a new node minus  the sum of weights of disconnected nodes, the time required to deliver the information to the IoBT general sink, and the cost of deploying a new node.
Further, in the studied IoBT connectivity problem, the defender must maintain the number of IoBT devices sensing the same type of information above a certain required threshold. Thus, the defender's strategy set is coupled with the attacker's action at each time epoch. Consequently, we cast the problem as a dynamic multistage Stackelberg connectivity game in which, at each stage of the game, the attacker acts as a leader, and the defender acts as a follower. Due to the dependence of the attacker's and defender's actions in each stage of the game on the network state, the feedback Stackelberg equilibrium (FSE) is used to solve the IoBT connectivity game. 
\item We analytically derive sufficient conditions for the IoBT network to remain connected at each stage of the game when the FSE solution is used.  Numerical results show that the expected number of disconnected sensors, when the FSE solution is used, decreases up to $43\%$ compared to a baseline scenario in which a Stackelberg game with no feedback is used,
 and up to $46\%$ compared to a baseline equal probability policy. 
\end{itemize}

The paper is organized as follows: Section I describes the adversarial IoBT system model.  Section II presents the formulation of the IoBT connectivity game. Section III presents the feedback Stackelberg solution of the IoBT game. 
Section IV presents sufficient connectivty conditions of the IoBT network when the FSE is used. Section V presents the simulations results and analysis. Finally, conclusions are drawn in Section VI. A complete list of the notations used is in Appendix A.

\vspace{-0.3 cm}

\section{System Model}


Consider an IoBT network composed of a set $\mathcal{D}$ of heterogeneous devices that can be of different types within a set $\mathcal{K}$ of size $K$. Each IoBT device can possibly represent a vehicle, a drone, a robot, a surveillance camera,  a sensor dedicated for a certain type of application, a sensor-actuator pair or soldier equiped with wearable sensors. Each device of type $\tau \in \mathcal{K}$ encompasses $N_\tau$ sensors (and their corresponding actuators) sensing a subset $\mathcal{H}_\tau$ of a set $ \mathcal{I}$ of types of information. Due to the heterogeneity of the IoBT nodes, in terms of roles and capabilities, we consider a hierarchical tree structure \cite{msn1} \footnote{Although the hierarchical network structure is chosen, in general, our proposed approach can accomodate any network topology}. The hierarchical IoBT structure provides scalability and allows the system operator to easily add new devices, which is suitable for a large-scale IoBT system. The area that the IoBT network spans is divided into subareas $A_1$, $A_2$,...$A_H$. Within each area $A_h$, devices sensing the same type of information $j \in  \mathcal{I}$ are organized into a cluster $\mathcal{D}_{jh}$. Thus, an IoBT device equipped with multiple sensors can belong to several clusters. Within each cluster, one of the devices is chosen to be a cluster head (CH), and, thus, the rest of the devices transmit their sensed data  to the CH. The CH then collects the information received from the devices in the cluster and sends it to a local sink (LS) serving subarea $A_h$. 

In each subarea $A_h$, multiple LSs can be deployed for redundancy. At any time epoch $t$, only one LS  is activated in each subrea $A_h$. Deploying redundant LSs ensures that there is a substitute for the activated LS in case of failure or malfunction. At each time epoch $t$, each activated LS processes its information and performs more sophisticated operations such as augmented sensing and extraction of useful information as requested by the global sink (GS). The GS is a high end node that eventually processes the information received from the activated LSs in order to identify events requested by the military commanders and provide situational awareness. Since the nodes in the considered IoBT are of heterogeneous capabilities and roles,  each node $i$ is assigned a weight $w_i$ depending on its importance. The weight $w_i$ of each device $i \in \mathcal{D}$ of type $\tau$ is measured in terms of the number of different sensors that the device includes i.e. $w_i=N_\tau$. LSs, on the other hand, perform more sophisticated operations. Thus,  each LS $i$  is assigned a weight $w_{L,i}$ that is higher than the weights of the devices i.e.  $w_{L,i} > \max_{1 \leq \tau \leq K} N_\tau$.
\begin{figure}[t]
	\centering
	\includegraphics[width=8 cm,height=5 cm,angle=0]{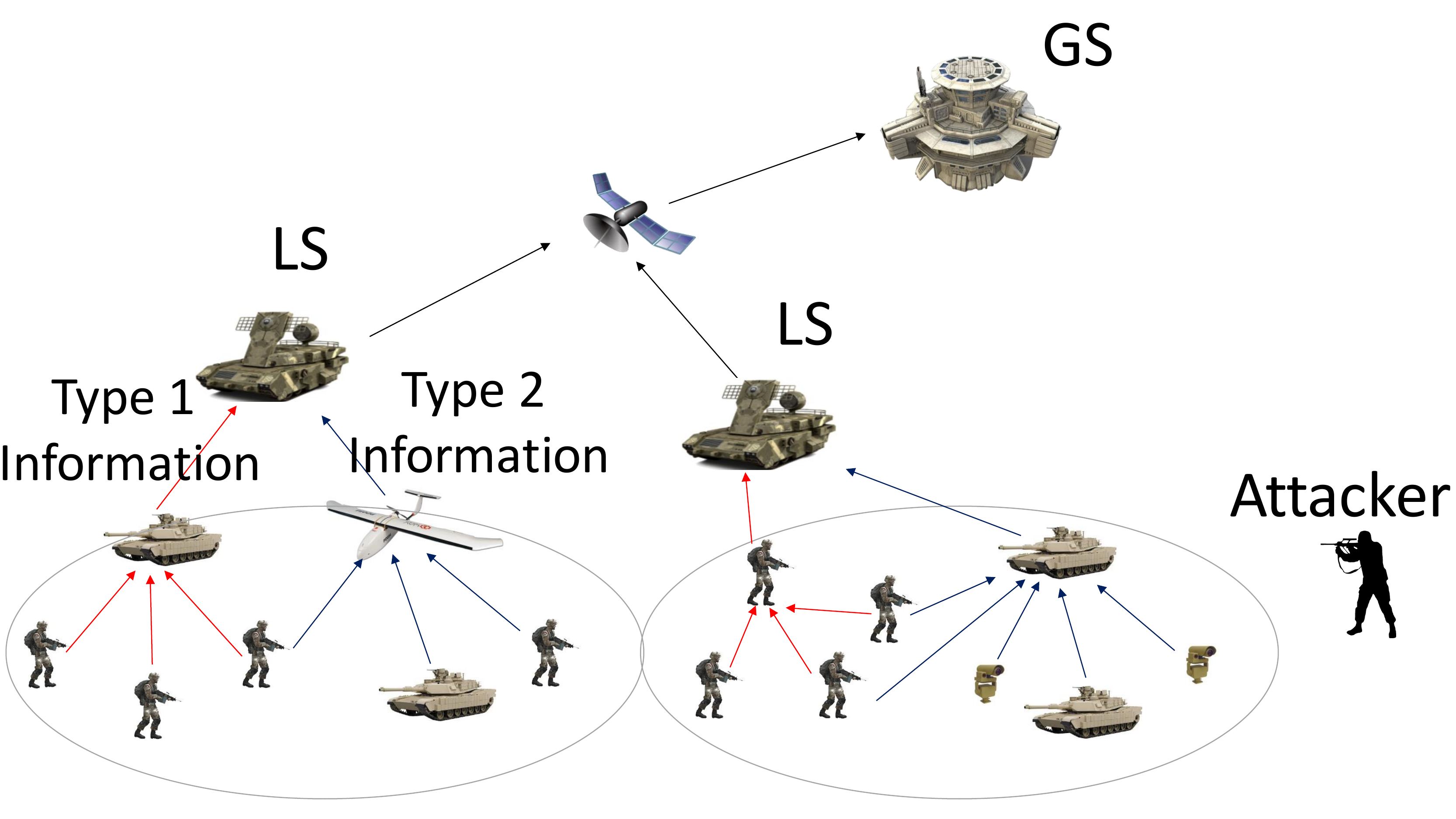}
	\caption{An example of the considered IoBT hierarchical system with two subareas and two types of information.
	}\vspace{-0.4 cm}\label{sysmodel}
\vspace{-0.4 cm}
\end{figure}

In this IoBT, an attacker is interested in minimizing the connectivity of the network to prevent the GS from detecting important events thus ultimately impairing its decisions.
To achieve this goal, the attacker chooses at each time $t$ to compromise i.e. gain control over one of the nodes in $\mathcal{B}=\cup_{h=1}^H \mathcal{L}_h \cup \mathcal{D}$ where  $\mathcal{L}_h$ is the set of LSs in subarea $A_h$.
In order to compromise each node, an attacker needs to spend time and computations to complete the attack \cite{Compromise}.  Thus, it is assumed at any time $t$, the attacker can only compromise one of the IoBT nodes.
Thus, at time $t$, the attacker chooses the node which maximizes its payoff which is expressed as the sum of weights of all nodes that will be disconnected from the GS, and the cost of compromising each node. This cost pertains to the resources needed to compromise any given targeted node.  Let $c_{\tau}$ be the cost of compromising device of type $\tau$ and $c_L$ be the cost of compromising an LS.  The attacker also incurs additional costs $c_{CH}$ and $c_{aL}$ in order to determine the CH of each cluster or the activated LS in each subarea. The costs $c_{CH}$ and $c_{aL}$ can represent, for example, the security costs of intercepting the beacon messages sent by the CH or the activated LS to the remaining devices.
Thus, the total cost of attacking device $i$ of type $\tau$ in subarea $A_h$ is given by: $c_i=c_\tau+\sum_{j=1}^Mx_{ijh}c_{CH}$ where{$M=|\mathcal{I}|$} and $x_{ijh}=1$ indicates that device $i$ is the CH of cluster $\mathcal{D}_{jh}$ or $x_{ijh}=0$, otherwise. The total cost incurred by attacking LS $i$ in subarea $A_h$ is given by: $c_{L,i}=c_{LS}+y_{ih}c_{aL}$ where $y_{ih}$ is the indicator that LS $i$ is activated in subarea $A_h$. In order to thwart the attacks made at each time epoch $t$, the defender can choose one of the following actions:
\begin{enumerate}
 \item[1)]  Deploys a new device of type $\tau$ in subarea $A_h$;
\item[2)]  Changes the cluster head in cluster $\mathcal{D}_{jh}$;
\item[3)]  Changes the activated LS in subarea $A_h$;
\item[4)]  Deploys a new LS in area $A_h$.
\end{enumerate}
Action $1)$ helps in maintaining the number of sensors necessary to maximize the amount of useful information gathered within an area. Actions $2)$ and $3)$ ensure the robustness of the network in case the currently activated LS or CH fails or is destroyed by the attacker. In practice, the newly deployed devices are typically brought from a warehouse that is in the proximity of the battlefield.
Action $4)$ ensures that there will always exist an LS that could serve the sensors in any subarea in case the activated LS fails or is compromised by the attacker. For actions 1)  and 4), the defender will incur a cost of deploying a device or an LS. Let $d_{\tau}$ be the cost of deploying a device of type $\tau$ and let $d_L$ be the cost of deploying an LS.
In an IoBT, the newly deployed devices and LSs are intially stored, prior to deployment, in a storage facility (or a military base) that is in the proximity of the IoBT network that is assumed to be secured from the attacker.


The objective of the defender is to maximize a payoff that captures the difference between the achieved utility and the sum of its costs until the end of the military operation at time epoch $T$. This maximization will be subject to the constraint that the  number of sensors  $N_{jh}(t)$ sensing information of type $j$ in subarea $A_h$ at each time epoch $t$  does not fall below a certain threshold $N_{\text{th},jh}$.  This constraint ensures that the GS as well as the LSs obtain the necessary information of type $j$ in a certain subarea $A_h$. The utility achieved from deploying a device of type $\tau$ in subarea $A_h$ is expressed in terms of the number of clusters that will restore their number of sensors above the threshold and is given by $u_\tau=\sum_{j=1}^MI(j \in \mathcal{H}_\tau)I(N_{jh}(t)<N_{\text{th},jh})$ where $I(.)$ is an indicator function. The utility achieved from deploying an LS in subarea $A_h$ is given by $u_L=B-L_h$ where $L_h=|\mathcal{L}_h|$ and  $B$ is a constant that reflects the recommended number of LSs in each subarea. The defender's utility is the utility of deploying a new device or an LS. 
The defender's cost at each time epoch $t$ is expressed in terms of the sum of weights of disconnected nodes, the time spent to deliver the information to the GS, and the cost of deploying a new node.

Due to the clear dependence between the goals and the actions of the attacker and the defender as well as the impact of the attacker and defender's actions on the IoBT network graph, the problem will be formulated as a \emph{noncooperative positional game} \cite{Connectivity} and  \cite{gametheory}, as explained next.
\vspace{-0.1 cm}
\section{IoBT Connectivity Game}

Connectivity games are game-theoretic models \cite{Connectivity} that capture situations which require the maintainenance and restoration of the normal operations of a given network. Connectivity games typically involve two players: a constructor who is responsible for restoration of nodes as well as the addition of new nodes, and a destructor who removes nodes from the network. The constructor in our game is the IoBT defender whereas the destructor is the attacker.
A connectivity game \cite{Connectivity} is an interactive game in which the constructor and the destructor play in alternation until one of the players wins the game. The winning condition for the constructor involves  maintaining the connectivity of the network. In particular, there are two types of objectives considered in classsical connectivity games \cite{Connectivity}: 1) A safety objective in which the constructor must maintain the connectivity of the network in every step of the game and 2) A reachability objective in which the constructor must obtain a connected network starting from a disconnected network.

However, in the IoBT setting, the objective is not only to maintain the network connectivity but also to maximize the network efficiency (for example in terms of energy efficiency and latency). Further, in the IoBT network, devices sense different types of information, and in order to obtain the necessary information of each type in a certain area, there is a need to ensure that the 
number of devices sensing the same type of information does not drop below a required threshold. Moreover, in a real-world IoBT, there is a cost incurred when a device is destroyed by the attacker or deployed by the defender, which is not considered in a classical connectivity game \cite{Connectivity}. The heterogeneity of the IoBT devices, in terms of their importance and roles, is also not taken into account in classical connectivity games \cite{Connectivity}.

Given these requirements and characteristics of the IoBT network, we consider an IoBT connectivity game that extends classical connectivity games.
The IoBT connectivity game is formulated as a discrete-time deterministic dynamic game ($\mathcal{P},\mathcal{T}, \mathcal{X}, (\mathcal{S}_{a,t},\mathcal{S}_{d,t})_{t \in \mathcal{T}})$ with  a finite number of stages, where the set of players $\mathcal{P}$ includes the attacker and the defender, and the set of stages $\mathcal{T}=\{1,2,...,T\}$.
In this IoBT connectivity game, the defender must observe the attacker's action before choosing its optimal action in order to maintain the number of devices in each area above the required threshold.  Thus, the IoBT connectivity game is formulated as a Stackelberg game in which, at each stage $t$ of the game, the attacker acts as the leader, and the defender acts as the follower. The state space $\mathcal{X}=\mathcal{X}_a \times \mathcal{X}_d$ is the set of all IoBT networks observed by the attacker and the defender up to stage $T$. The state of the game at stage $t$ is $\psi_t=(\psi_{a,t},\psi_{d,t}) \in \mathcal{X}$ where $\psi_{a,t}$ is the network observed by the attacker and $\psi_{d,t}$ is the network observed by the defender. The network state observed by the attacker is given by: $\psi_{a,t}=(\mathcal{D}_{a}(t),\{\mathcal{L}_{a,h}(t), 1 \leq h \leq H\}, \{\mathcal{D}_{a,jh}(t), 1 \leq j \leq I, 1 \leq h \leq H\}, \{f_{a,jh}(t),  1 \leq j \leq I, 1 \leq h \leq H\},  \{s_{a,h}(t), 1 \leq h \leq H\} )$ where $\mathcal{D}_{a}(t)$  represents the set of devices, $\mathcal{L}_{a,h}(t)$ is the set of LSs in subarea $A_h$, $\mathcal{D}_{a,jh}(t)$ is the cluster of devices sensing information type $j$ in subarea $A_h$, $f_{a,jh}(t)$ is the index of the device that is the CH of $\mathcal{D}_{a,jh}(t)$, and $s_{a,h}(t)$ is the index of the activated LS in subarea $A_h$. Similarly, the network state $\psi_{d,t}$ observed by the defender is given by: $\psi_{d,t}=(\mathcal{D}_{d}(t),\{\mathcal{L}_{d,h}(t), 1 \leq h \leq H\}, \{\mathcal{D}_{d,jh}(t), 1 \leq j \leq I, 1 \leq h \leq H\}, \{f_{d,jh}(t),  1 \leq j \leq I, 1 \leq h \leq H\},  \{s_{d,h}(t), 1 \leq h \leq H\} )$ where $\mathcal{D}_{d}(t)$  represents the set of devices, $\mathcal{L}_{d,h}(t)$ is the set of LSs in subarea $A_h$, $\mathcal{D}_{d,jh}(t)$ is the cluster of devices sensing information type $j$ in subarea $A_h$, $f_{d,jh}(t)$ is the index of the device that is the CH of $\mathcal{D}_{d,jh}(t)$, and $s_{d,h}(t)$ is the index of the activated LS in subarea $A_h$. $f_{a,jh}(t)$ and $f_{d,jh}(t)$ are set to be zero if the CHs of $\mathcal{D}_{a,jh}(t)$ and $\mathcal{D}_{d,jh}(t)$ are compromised, respectively. Similarly, $s_{a,h}(t)$ and $s_{d,h}(t)$ are set to be zero if the activated LS of subarea $A_h$ is compromised in $\psi_{a,t}$ and $\psi_{d,t}$ respectively. The defender and the attacker are assumed to have perfect knowledge on the IoBT network. Assuming perfect knowledge by the attacker about the IoBT network allows the defender to account for the worst-case scenario as is typical in existing works such as in \cite{IoTjournal1}.

In our game, the set of pure strategies of the attacker $\mathcal{S}_{a,t}(\psi_t)$ at stage $t$ is $\mathcal{S}_{a,t}(\psi_t)=\{a_{d,i}, i \in \mathcal{D}_{a}(t)\} \cup \{a_{L,lh}, l \in \mathcal{L}_{a,h}(t),  1 \leq h \leq H\}$, where action $a_{d,i}$ corresponds to destroying device $i$, and action $a_{L,lh}$ corresponds to destroying LS $l$ in subarea $A_h$. Due to the constraint on the number of devices in each cluster, the strategy set of the defender at each stage $t$ is coupled to the attacker's action $a_t$ and is a function of the network state $\psi_t$. Hence, the strategy set $\mathcal{S}_d(\psi_t,a_t)$ of the defender is
%
\vspace{-0.1 cm}

\small
\begin{equation}
    \mathcal{S}_{d,t}(\psi_t,a_t)=\left\{
                \begin{array}{ll}
                 \{b_{d,\tau h},  \forall \tau \hspace{0.1 cm} \mid \hspace{0.1 cm} j \in \mathcal{H}_\tau \forall j \in \mathcal{Y}_{ih}(t)\} \hspace{0.2 cm}  \textrm{if} \hspace{0.2 cm} E_1(a_t),\\
               \mathcal{Q}_d, \hspace{1 cm} \textrm{otherwise,}\\
                 
                \end{array}
              \right. \label{defset}
  \end{equation}
\normalsize
%
where condition $E_1(a_t)$ is $a_t=a_{d,i}$, \hspace{0.1 cm} $\exists j \hspace{0.1 cm} \textrm{s.t.} \hspace{0.1 cm}i \in \mathcal{D}_{a,jh}(t), N_{a,jh}(t) \leq N_{\textrm{th},jh}$,
 the set 
$\mathcal{Y}_{ih}(t)=\{j \in \mathcal{I} \hspace{0.1 cm} \mid \hspace{0.1 cm}i \in \mathcal{D}_{a,jh}(t), 
N_{d,jh}(t)<N_{\textrm{th},jh}\}$, $N_{d,jh}(t)$ is the number of devices in cluster $\mathcal{D}_{d,jh}(t)$ in network $\psi_{d,t}$, $N_{a,jh}(t)$ is the number of devices in cluster $\mathcal{D}_{a,jh}(t)$ in network $\psi_{a,t}$,  
and the set $\mathcal{Q}_d$ is the set of all possible strategies of the defender given by
\begin{eqnarray}
&&\hspace{-0.7 cm}\mathcal{Q}_d=\{b_{c,ijh}, i \in \mathcal{D}_{d,jh}(t), 1 \leq j \leq M, 1 \leq h \leq H\} \nonumber\\
&&\hspace{0.1 cm}\cup\{b_{d,\tau h}, 1 \leq \tau \leq K, 1\leq h \leq H\} \cup  \{b_{L,h}, 1\leq h \leq H\}\nonumber\\
 &&\hspace{0.1 cm}\cup  \{b_{a,lh}, l \in \mathcal{L}_{d,h}(t), 1\leq h \leq H\}. \label{kd}
\end{eqnarray}

Action $b_{d,\tau h}$ corresponds to deploying a new device of type $\tau$ in subarea $A_h$, action $b_{c,ijh}$ corresponds to assigning device $i$ to be the CH of cluster $\mathcal{D}_{d,jh}(t)$, action $b_{L,h}$ corresponds to deploying a new LS in subarea $A_h$, and action $b_{a,lh}$ corresponds to activating LS $l$ in subarea $A_h$. According to (\ref{defset}), if the attacker destroys a device and causes the number of devices in some clusters to drop below the required threshold, the strategy set of the defender will only include the actions of deploying a device that restores the number of devices in each affected cluster to the required threshold. Otherwise, the defender can choose to either change the CHs, change the LSs, deploy a new device, or deploy a new LS. 
The evolution of the attacker's state $\psi_{a,t}$ is given by

\small
\begin{equation}
 \hspace{-1 cm} \mathcal{D}_{a}(t+1)  =\left\{
                \begin{array}{ll}
                \mathcal{D}_{a}(t)\setminus\{i\} \hspace{0.2 cm}  \textrm{if} \hspace{0.2 cm} a_t=a_{d,i} \hspace{0.1 cm} b_t \neq b_{d,\tau h}  ,\\
                \mathcal{D}_{a}(t), \hspace{1 cm} \textrm{otherwise,}\\
                \end{array}
              \right. \label{defset}
  \end{equation}

\begin{equation}
  \mathcal{L}_{a,h}(t+1)  =\left\{
                \begin{array}{ll}
                \mathcal{L}_{a,h}(t)\setminus\{l\} \hspace{0.2 cm}  \textrm{if} \hspace{0.2 cm} a_t=a_{L,lh}, \hspace{0.1 cm} b_t \neq b_{L, h}, \\
                \mathcal{L}_{a,h}(t), \hspace{1 cm} \textrm{otherwise,}\\
                \end{array}
              \right. \label{defset}
  \end{equation}


\begin{equation}
  \mathcal{D}_{a,jh}(t+1)  =\left\{
                \begin{array}{ll}
                \mathcal{D}_{a,jh}(t)\setminus\{i\} \hspace{0.2 cm}  \textrm{if} \hspace{0.2 cm} a_t=a_{d,i} \hspace{0.1 cm} i \in \mathcal{D}_{a,jh}(t)\\
                \hspace{2.4 cm} b_t \neq b_{d,\tau h},\\
                \mathcal{D}_{a,jh}(t), \hspace{1 cm} \textrm{otherwise,}\\
                \end{array}
              \right. \label{defset}
  \end{equation}


\begin{equation}
  f_{a,jh}(t+1)  =\left\{
                \begin{array}{ll}
                0 \hspace{1 cm}  \textrm{if} \hspace{0.2 cm} a_t=a_{d,i}, \hspace{0.1 cm}  i \in \mathcal{D}_{a,jh}(t), f_{a,jh}(t)=i,\\
                \hspace{1.6 cm}   b_t \neq b_{c,i'jh} \\\\
                 i \hspace{1 cm} \textrm{if}\hspace{0.1 cm} b_t= b_{c,ijh}, a_t \in \mathcal{S}_a(\psi_t),{i \in \mathcal{D}_{a,jh}(t)}, \\\\
                f_{a,jh}(t),  \hspace{0.1 cm} \textrm{otherwise,}\\
                \end{array}
              \right. \label{defset}
  \end{equation}

\begin{equation}
  s_{a,h}(t+1)  =\left\{
                \begin{array}{ll}
               0 \hspace{0.2 cm}  \textrm{if} \hspace{0.2 cm} a_t=a_{L,lh}, \hspace{0.1 cm} s_{a,h}(t)=l,  \hspace{0.1 cm} b_t \neq b_{a,lh}  ,\\\\
               l   \hspace{0.2 cm}   \textrm{if} \hspace{0.1 cm} b_t=b_{a,lh}, a_t \in \mathcal{S}_a(\psi_t) ,\\\\
                s_{a,h}(t), \hspace{0.2 cm} \textrm{otherwise.}\\
                \end{array}
              \right. \label{defset}
  \end{equation}
\normalsize

Similarly, the evolution of the defender's state $\psi_{d,t}$ is 

\small
\begin{equation}
 \hspace{-1 cm} \mathcal{D}_{d}(t+1)  =\left\{
                \begin{array}{ll}
                \mathcal{D}_{d}(t)\setminus\{i\} \hspace{0.2 cm}  \textrm{if} \hspace{0.2 cm} a_{t+1}=a_{d,i} \hspace{0.1 cm} b_t \neq b_{d,\tau h}  ,\\
                \mathcal{D}_{d}(t), \hspace{1 cm} \textrm{otherwise,}\\
                \end{array}
              \right. \label{defset}
  \end{equation}

\begin{equation}
  \mathcal{L}_{d,h}(t+1)  =\left\{
                \begin{array}{ll}
                \mathcal{L}_{d,h}(t)\setminus\{l\} \hspace{0.2 cm}  \textrm{if} \hspace{0.2 cm} a_{t+1}=a_{L,lh}, \hspace{0.1 cm} b_t \neq b_{L, h}, \\
                \mathcal{L}_{d,h}(t), \hspace{1 cm} \textrm{otherwise,}\\
                \end{array}
              \right. \label{defset}
  \end{equation}

\begin{equation}
  \mathcal{D}_{d,jh}(t+1)  =\left\{
                \begin{array}{ll}
                \mathcal{D}_{d,jh}(t)\setminus\{i\} \hspace{0.2 cm}  \textrm{if} \hspace{0.2 cm} a_{t+1}=a_{d,i} \hspace{0.1 cm} i \in \mathcal{D}_{d,jh}(t)\\
                \hspace{2.6 cm} b_t \neq b_{d,\tau h},\\
                \mathcal{D}_{d,jh}(t), \hspace{1 cm} \textrm{otherwise,}\\
                \end{array}
              \right. \label{defset}
  \end{equation}

\begin{equation}
  f_{d,jh}(t+1)  =\left\{
                \begin{array}{ll}
                0 \hspace{0.7 cm}  \textrm{if} \hspace{0.2 cm} a_{t+1}=a_{d,i}, \hspace{0.1 cm}  i \in \mathcal{D}_{a,jh}(t+1),\\
                \hspace{1.3 cm}f_{a,jh}(t+1)=i, b_t \in \mathcal{S}_{d,t}(\psi_t,a_t), \\\\
                 i \hspace{0.7 cm} \textrm{if}\hspace{0.1 cm} b_t= b_{c,ijh}, \hspace{0.1 cm} a_{t+1} \neq a_{d,i},  \\\\
                f_{d,jh}(t), \hspace{0.1 cm} \textrm{otherwise,}\\
                \end{array}
              \right. \label{defset}
  \end{equation}

\begin{equation}
  s_{d,h}(t+1)  =\left\{
                \begin{array}{ll}
               0 \hspace{0.2 cm}  \textrm{if} \hspace{0.2 cm} a_{t+1}=a_{L,lh}, \hspace{0.1 cm} s_{d,h}(t)=l,  \hspace{0.1 cm} b_t \in \mathcal{S}_{d,t}(\psi_t,a_t)  ,\\\\
               l   \hspace{0.2 cm}   \textrm{if} \hspace{0.1 cm} b_t=b_{a,lh}, a_{t+1} \neq a_{L,lh},\\\\
                s_{d,h}(t), \hspace{0.2 cm} \textrm{otherwise.}\\
                \end{array}
              \right. \label{defset}
  \end{equation}

\normalsize

The attacker's payoff at each stage $t$ is expressed in terms of its utility which is the sum $S_{D,t}(a_t,b_t,\psi_t)$ of weights of all nodes that will be disconnected from the GS, and the cost $C_{a,t}(a_t,b_t)$ of destroying node $i$, as follows:
\begin{equation}
P_{a,t}(a_t,b_t,\psi_t)=S_{D,t}(a_t,b_t,\psi_t)- \nu C_{a,t}(a_t,b_t,\psi_t),
\end{equation}
where $\nu$ is a normalization constant. The defender's payoff at stage $t$ is expressed in terms of its utility minus its costs. The costs include the sum of weights of disconnected nodes $S_{D,t}(a_t,b_t,\psi_t)$, the transmission time $\Lambda_t(a_t,b_t,\psi_t)$ required to deliver the information to the GS, and the cost of deploying a new node $C_{d,t}(a_t,b_t,\psi_t)$, as follows:
\begin{eqnarray}
\hspace{-0.2 cm}P_{d,t}(a_t,b_t,\psi_t)&=& U_{d,t}(a_t,b_t,\psi_t)-\eta S_{D,t}(a_t,b_t,\psi_t) \nonumber\\
&&\hspace{-0.5 cm}-\mu \Lambda_t(a_t,b_t,\psi_t) - \lambda C_{d,t}(a_t,b_t,\psi_t),\label{dpayoff}
\end{eqnarray}
where $\eta$, $\mu$ and $\lambda$ are normalization constants. For readability, the expressions of $S_{D,t}(a_t,b_t)$,  $\Lambda_t(a_t,b_t,\psi_t)$, $C_{a,t}(a_t,b_t,\psi_t)$, $C_{d,t}(a_t,b_t,\psi_t)$ and $U_d(a_t,b_t,\psi_t)$ in terms of each pair of the attacker's and defender's pure strategies $a_t$ and $b_t$ are given in Appendix B.

To increase the uncertainty of its action and improve its payoff, the attacker will use a mixed strategy $\boldsymbol{q}_t$ at each stage $t$, thus randomizing its choices across its pure strategies.  The defender, on the other hand, responds with a pure strategy $b_t$ \cite{SSG}. It is assumed that the defender can perfectly observe the strategy of the attacker at each stage $t$.
 The objective of the attacker is then to find the optimal mixed strategies $\boldsymbol{q}_1, \boldsymbol{q}_2,...,\boldsymbol{q}_T$ that maximize the sum of its expected payoffs until stage $T$ 
\vspace{-0.2 cm}
\begin{eqnarray}
\hspace{-0.3 cm}\max_{\boldsymbol{q}_1, \boldsymbol{q}_2,...,\boldsymbol{q}_T} \sum_{t=1}^T \sum_{a_t \in \mathcal{S}_{a,t}} q_{a_t}P_{a,t}(a_t,b_t,\psi_t) \hspace{0.2 cm} \textrm{s.t.} \hspace{0.1 cm} \boldsymbol{1}\cdot \boldsymbol{q}_t=1 \hspace{0.3 cm} \forall  t,\label{attackopt}
\end{eqnarray}
where $q_{a_t}$ is the probabilitiy with which the attacker chooses action $a_t$. Similarly, the objective of the defender is to find the optimal strategies $b_1, b_2,...,b_T$ that maximizes the sum of its expected payoffs up to stage $T$ i.e.
\begin{equation}
\hspace{-0.3 cm} \max_{b_1, b_2,...,b_T} \sum_{t=1}^T \sum_{a_t \in \mathcal{S}_{a,t}}q_{a_t}P_{d,t}(a_t,b_t,\psi_t) \hspace{0.1 cm} \textrm{s.t.} \hspace{0.1 cm} b_t \in \mathcal{S}_{d,t}(\psi_t,a_t) \hspace{0.1 cm} \forall  t. \hspace{-0.4 cm}\label{defopt}
\end{equation}
Since the attacker's and the defender's actions are coupled to the current stage $t$ and the state $\psi_t$, the \emph{feedback Stackelberg equilibrium} will be used as a solution, as discussed next.
\vspace{-0.2 cm}
\section{Feedback Stackelberg Solution}
The FSE applies for situations in which the leader first chooses its strategy at time instant $t$ $t$, and, then, the follower chooses its strategy based on the current state and the leader's action. In the proposed IoBT connectivity game, the strategy sets $\mathcal{S}_{a,t}(\psi_t)$ and  $\mathcal{S}_{d,t}(\psi_t,a_t)$  of both the attacker and the defender depend on the current state $\psi_t$. Further, the defender strategy set $\mathcal{S}_{d,t}(\psi_t,a_t)$ at time instant $t$ is dependent only on the attacker's action $a_t$ at the current time instant $t$ according to (\ref{defset}) \cite{FSE}. Thus, the FSE is a  suitable solution for our IoBT connectivity game.  The FSE is subgame perfect and time consistent. Thus, at each stage $t$ of the game, the FSE considers the immediate payoff at stage $t$ as well as the expected sum of payoffs of the subsequent stages up to $T$, in contrast to static Stackelberg games which only consider the immediate payoff at stage $t$.
Hence, the FSE solution is obtained recursively using dynamic programing and solving a Stackelberg game at each stage $t$ of the game.   Further, the dynamic nature of the FSE solution makes it adaptive to system changes at any instant $t$. In \cite{FSEstochastic}, it is shown that the FSE remains stable under stochastic Markovian perturbations of the system. The robustness and adaptability of the FSE solution is desirable for a dynamic IoBT system that is constantly subject to random changes due to adversarial conditions.

 Let $ \boldsymbol{q}=(\boldsymbol{q}_1,\boldsymbol{q}_2,..., \boldsymbol{q}_T)$ and $\boldsymbol{b}=(b_1,b_2,...,b_T)$ be respectively the strategy vectors of the attacker and the defender respectively. The FSE strategy will be
\begin{definition}
The strategy profile ($ \boldsymbol{q}^*$, $\boldsymbol{b}^*$) constitute a \emph{feedback Stackelberg equilibrium}
if $\forall \psi_t \in \mathcal{X},$ $t \in \mathcal{T}$,
\begin{equation}
\vspace{-0.1 cm}
\Omega_{a,t}(\boldsymbol{q}^*_{t},b^*_t, \psi_t)=\max_{\boldsymbol{q}_t \in \mathcal{M}_{a,t}} \max_{b_t \in \mathcal{R}^d(\boldsymbol{q}_t)}\hspace{-0.2 cm}\Omega_{a,t}(\boldsymbol{q}_t,b_t, \psi_t), 
\vspace{-0.1 cm}
\end{equation}
where  $\mathcal{M}_{a,t}$ is the space of mixed strategies of the attacker at stage $t$, $\Omega_{a,t}(\boldsymbol{q}_t,b_t, \psi_t)$ is the expected payoff of the attacker starting from stage $t$ and for a state $\psi_t$, and $\mathcal{R}^d(\boldsymbol{q}_t)$ is the optimal strategy set of the defender to the mixed strategy $\boldsymbol{q}_t$ of the attacker and is given by
\vspace{-0.1 cm}
\begin{eqnarray}
\mathcal{R}^d(\boldsymbol{q}_t)= \{b'_t \hspace{0.1 cm} s.t. \hspace{0.1 cm} b'_t= \arg\max_{b_t}\Omega_{d,t}(\boldsymbol{q}_t,b_t, \psi_t) \}, \label{followset}
\vspace{-0.1 cm}
\end{eqnarray}
\end{definition}
for every $b_t \in \cap_{a_t \hspace{0.1 cm} s.t.  \hspace{0.1 cm} q_{a_t}>0} S_{d,t}(\psi_t,a_t)$, where  $\Omega_{d,t}(\boldsymbol{q}_t,b_t, \psi_t)$ is the expected payoff of the defender starting from stage $t$.
At an FSE, the expected payoffs at stage $t$ and for state $\psi_t$ are computed recursively as
\small
\begin{eqnarray}
\hspace{-0.2 cm}\Omega_{a,t}(\boldsymbol{q}_t,b_t, \psi_t)&=&\sum_{a_t \in \mathcal{S}_{a,t}}q_{a_t}\Omega_{a,t+1}(\boldsymbol{q}^*_{t+1},b^*_{t+1}, \psi_{t+1}(a_t,b_t))\nonumber\\
&&+\sum_{a_t \in \mathcal{S}_{a,t}}q_{a_t}P_{a,t}(a_t,b_t), \label{aexpectedpayoff}
\end{eqnarray}
\vspace{-0.4 cm}
\begin{eqnarray}
\hspace{-0.2 cm}\Omega_{d,t}(\boldsymbol{q}_t,b_t, \psi_t)&=&\sum_{a_t \in \mathcal{S}_{a,t}}q_{a_t}\Omega_{d,t+1}(\boldsymbol{q}^*_{t+1},b^*_{t+1}, \psi_{t+1}(a_t,b_t))\nonumber\\
&&+\sum_{a_t \in \mathcal{S}_{a,t}}q_{a_t}P_{d,t}(a_t,b_t),  \label{followrecur}
\end{eqnarray}
\vspace{-0.4 cm}
\\ \normalsize
with $\Omega_{a,T+1}=0$,  $\Omega_{d,T+1}=0$.
From  (\ref{followset}) and (\ref{followrecur}), we can directly find the optimal action of the defender for a given attacker action at stage $t$ as follows.
\begin{remark}
Given an attacker strategy profile $\boldsymbol{q}_t$, the defender chooses the action $b'_t$ such that
\begin{eqnarray}
&&\hspace{-1 cm} \sum_{a_t \in \mathcal{S}_{a,t}}\hspace{-0.2 cm}q_{a_t}(P_{d,t}(a_t,b_t)+\Omega_{d,t+1}(\boldsymbol{q}^*_{t+1},b^*_{t+1}, \psi_{t+1}(a_t,b_t)) ) \nonumber\\
&&\hspace{-1 cm}\leq \hspace{-0.2 cm} \sum_{a_t \in \mathcal{S}_{a,t}}\hspace{-0.3 cm}q_{a_t}(P_{d,t}(a_t,b'_t)+\Omega_{d,t+1}(\boldsymbol{q}^*_{t+1},b^*_{t+1}, \psi_{t+1}(a_t,b'_t))), \nonumber\\
&&\hspace{-0.5 cm}\forall b_t \in \cap_{a_t \hspace{0.1 cm} \textrm{s.t.}  \hspace{0.1 cm} q_{a_t}>0} S_{d,t}(\psi_t,a_t). \label{defbestresponse}
\end{eqnarray}
\end{remark}

In order to find the optimal  mixed strategy of the attacker (i.e. the leader) at each stage $t$, the leader usually solves a linear program for each particular strategy $b'_t$ chosen by the follower (as in \cite{mixedstack}). Then, it chooses, as optimal solution, the mixed strategy of the optimization problem that has the highest payoff. The proposed solution in \cite{mixedstack} is considered when the leader and follower's actions are not coupled. However, in our problem and as shown in (\ref{defset}), the follower's strategy set is coupled to the network state $\psi_t$ and to the attacker's action $a_t$. In other words, if the follower chooses action $b'_t$ as its optimal action and when the network state $\psi_t$, it means that the attacker has chosen its action from the subset $\mathcal{S'}_{a,t}(b'_t, \psi_t)$ of the strategy set $\mathcal{S}_{a,t}$. We define $\mathcal{J}_{\psi_{a,t}}=\{(j,h) \hspace{0.1 cm} \textrm{s.t.} \hspace{0.1 cm} N_{a,jh}(t) = N_{\textrm{th},jh}\}$. Then, the set  $\mathcal{S'}_{a,t}(b'_t, \psi_t)$ is obtained as 
\begin{equation}
   \mathcal{S'}_{a,t}(b'_t, \psi_t)= 
\begin{cases}
   \mathcal{S}_{a,t}\setminus \mathcal{R}_{a,t}, & \text{if } \mathcal{J}_{\psi_{a,t}}  \neq \phi,  b'_t \in \mathcal{V}_{t},\\
    \mathcal{S}_{a,t},              & \text{otherwise},
\end{cases} \label{attackset}
\end{equation}
where $\mathcal{R}_{a,t}=\{a_{d,i} \hspace{0.1 cm} \textrm{s.t.} \hspace{0.1 cm} i \in \mathcal{D}_{jh}(t), (j,h) \in \mathcal{J}_{\psi_{a,t}}\}$, $\mathcal{V}_{t}=\mathcal{Q}_d \setminus \{b_{d,ih},  \forall i \hspace{0.1 cm} \textrm{s.t} \hspace{0.1 cm} j \in \mathcal{H}_i \forall j \in \mathcal{Y}_{ih}(t)\}$, the set 
$\mathcal{Y}_{ih}(t)=\{j \in \mathcal{I} \hspace{0.1 cm} \mid \hspace{0.1 cm}i \in \mathcal{D}_{a,jh}(t), 
N_{d,jh}(t)<N_{\textrm{th},jh}\}$, and the set $\mathcal{Q}_d$ is defined in (\ref{kd}). Thus, at each stage $t$, the attacker solves the following linear program for each strategy $b'_t$ of the defender and given a network state $\psi_t$
\small
\begin{eqnarray}
&&\hspace{-0.5 cm}\max_{q_{a,t}}\hspace{-0.2 cm} \sum_{a_t \in \mathcal{S'}_{a,t}(b'_t, \psi_t)}\hspace{-0.8 cm}q_{a_t}(P_{a,t}(a_t,b'_t)+\Omega_{a,t+1}(\boldsymbol{q}^*_{t+1},b^*_{t+1}, \psi_{t+1}(a_t,b'_t)) ),\nonumber\\
&&\hspace{-0.5 cm}\text{s.t.}  \sum_{a_t \in \mathcal{S'}_{a,t}(b'_t, \psi_t)}\hspace{-0.8 cm}q_{a_t}=1, \nonumber\\
&& \hspace{-0.8 cm}\forall b_t, \sum_{a_t \in \mathcal{S'}_{a,t}(b_t, \psi_t)}\hspace{-0.8 cm}q_{a_t}(P_{d,t}(a_t,b_t)+\Omega_{d,t+1}(\boldsymbol{q}^*_{t+1},b^*_{t+1}, \psi_{t+1}(a_t,b_t)) ) \nonumber\\
&&\hspace{-0.2 cm}\leq \hspace{-0.5 cm} \sum_{a_t \in \mathcal{S'}_{a,t}(b'_t, \psi_t)}\hspace{-0.8 cm}q_{a_t}(P_{d,t}(a_t,b'_t)+\Omega_{d,t+1}(\boldsymbol{q}^*_{t+1},b^*_{t+1}, \psi_{t+1}(a_t,b'_t)) ).
\label{mixedStacklinear}
\end{eqnarray}
\normalsize
\vspace{-0.1 cm}

In the proposed linear program, for each strategy $b'_t$ of the defender and for a given state $\psi_t$, the attacker determines the optimal probability for each one of its actions according to (\ref{mixedStacklinear}) while taking into account the best response of the defender defined in (\ref{defbestresponse}).
The proposed linear program holds under the assumption that the attacker has full knowedge of the defender's actions and payoffs at each time epoch $t$. The conditions under which  the IoBT network remains connected are derived in the following section.
\vspace{-0.3 cm}
\normalsize
\section{Connectivity Conditions}

In the IoBT, maintaining connectivity at any time is critical for the successful operation of the network. Thus, the safety objective of connectivity games is more suitable to the IoBT than the reachability objective. In our game, the connectivity of the IoBT network is maintained at each stage $t$ if the attacker does not choose an action that causes disconnection. Disconnection occurs if neither a cluster nor an entire subarea gets disconnected from the GS.  To determine the connectivity conditions under which the IoBT network remains connected, when the FSE solution is used, we first determine, for each action $b'_t$ of the defender, the set $\mathcal{Z}_D(b'_t)$ of attacker's actions that cause disconnection:

\small
 \begin{equation}
   \mathcal{Z}_D(b'_t)= \mathcal{S'}_{a,t}(b'_t, \psi_t) \cap
\begin{cases}
 \mathcal{V}_1, & \text{if } b'_t=b_{c,ijh},\\
\mathcal{V}_2, , & \text{if } b'_t=b_{d, \tau h},\\
\mathcal{V}_3, & \text{if } b'_t=b_{a, lh},\\
\mathcal{V}_4, & \text{if } b'_t=b_{L, h}.\\
\end{cases}
\end{equation}
\normalsize
where the set $\mathcal{V}_1=\{a_{f_{j'h'}(t)}, j' \neq j \hspace{0.1 cm} \text{or}  \hspace{0.1 cm} h' \neq h\}\cup \{ a_{s_h(t)}, \forall h\}$, $\mathcal{V}_2=\{a_{f_{jh}(t)},  \forall j,h\}\cup \{ a_{s_h(t)}\forall h\}$, $\mathcal{V}_3=\{a_{f_{jh}(t)}, \forall j,h\}\cup \{ a_{s_h'(t)}, h' \neq h\}$ and $\mathcal{V}_4=\{a_{f_{jh}(t)}, \forall j,h\}\cup \{ a_{s_h(t)}, \forall h\}$.

Thus, given that the optimal action of the defender is $b'_t$, disconnection does not occur if the attacker does not choose an action from the set $\mathcal{Z}_D(b'_t)$.  Let $F_{d,t}(a_t,b_t)=P_{d,t}(a_t,b_t)+\Omega_{d,t+1}(\boldsymbol{q}^*_{t+1},b^*_{t+1}, \psi_{t+1}(a_t,b_t)) )$ and $F_{a,t}(a_t,b_t)=P_{a,t}(a_t,b_t)+\Omega_{a,t+1}(\boldsymbol{q}^*_{t+1},b^*_{t+1}, \psi_{t+1}(a_t,b_t)) )$. The following proposition provides sufficient conditions for the IoBT network to remain connected when the FSE solution is used.

\begin{proposition}
\emph{The proposed FSE solution($\boldsymbol{q}^*, \boldsymbol{b}^*$) maintains connectivity of the IoBT network if for every $\psi_t \in \mathcal{X}$, $t \in \mathcal{T}$.
For each attacker's action $ a^d_t$ in $\mathcal{Z}_D(b^*_t)$, there exists  $a^n_t$ in $\mathcal{S'}_{a,t}(b^*_t, \psi_t) \setminus \mathcal{Z}_D(b^*_t)$ in which one of the following conditions hold:
\begin{enumerate}
\small
\item $\mathcal{B}_{3,t}(b^*_t) = \phi$.
\\
\item If  $\mathcal{B}_{1,t}(b^*_t) \cup \mathcal{B}_{2,t} (b^*_t) \neq \phi$, $W \cdot F_{a,t}(a^n_t,b^*_t) >  F_{a,t}(a^d_t,b^*_t)$.\\
\item If  $\mathcal{B}_{1,t}(b^*_t) \neq \phi$, $\mathcal{B}_{2,t}(b^*_t) \neq \phi$, and  $\min_{b_t \in \mathcal{B}_{1,t}(b^*_t)} \frac{F_{d,t}(a^d_t,b_t)-F_{d,t}(a^d_t,b^*_t)}{F_{d,t}(a^n_t,b_t)-F_{d,t}(a^n_t,b^*_t)}<1$, \\
$\arg \min_{b_t \in \mathcal{B}_{1,t}(b^*_t)} \frac{F_{d,t}(a^d_t,b_t)-F_{d,t}(a^d_t,b^*_t)}{F_{d,t}(a^n_t,b_t)-F_{d,t}(a^n_t,b^*_t)} $
\\
 \hspace{1 cm} $\leq \arg \max_{b_t \in \mathcal{B}_{2,t}(b^*_t)} \frac{F_{d,t}(a^d_t,b_t)-F_{d,t}(a^d_t,b^*_t)}{F_{d,t}(a^n_t,b_t)-F_{d,t}(a^n_t,b^*_t)}$.
\\
\item If $\mathcal{B}_{1,t}(b^*_t) \neq \phi$, $\mathcal{B}_{2,t}(b^*_t) \neq \phi$, and  $\min_{b_t \in \mathcal{B}_{1,t}(b^*_t)} \frac{F_{d,t}(a^d_t,b_t)-F_{d,t}(a^d_t,b^*_t)}{F_{d,t}(a^n_t,b_t)-F_{d,t}(a^n_t,b^*_t)} \geq 1$, \\
$\arg \max_{b_t \in \mathcal{B}_{2,t}(b^*_t)} \frac{F_{d,t}(a^d_t,b_t)-F_{d,t}(a^d_t,b^*_t)}{F_{d,t}(a^n_t,b_t)-F_{d,t}(a^n_t,b^*_t)}\geq 1$.
\\
\item If  $\mathcal{B}_{1,t}(b^*_t) \cup \mathcal{B}_{2,t}(b^*_t)  = \phi$,  $F_{a,t}(a^n_t,b^*_t)>0$.
\end{enumerate}
where\\
\small
$\mathcal{B}_{1,t}(b^*_t)=\{ b_t \in \mathcal{S}_{dn,t}  \mid   F_{d,t}(a^n_t,b_t)-F_{d,t}(a^n_t,b^*_t)\geq 0, F_{d,t}(a^d_t,b_t)-F_{d,t}(a^d_t,b^*_t)\geq 0\}$,\\
$\mathcal{B}_{2,t}(b^*_t)=\{b_t \in \mathcal{S}_{dn,t}\hspace{0.1 cm }\mid \hspace{0.1 cm} F_{d,t}(a^n_t,b_t)-F_{d,t}(a^n_t,b^*_t)< 0, F_{d,t}(a^d_t,b_t)-F_{d,t}(a^d_t,b^*_t)\leq 0\}$,\\
$\mathcal{B}_{3,t}(b^*_t)=\{b_t \in \mathcal{S}_{dn,t}  \hspace{0.1 cm } \mid  \hspace{0.1 cm } F_{d,t}(a^n_t,b_t)-F_{d,t}(a^n_t,b^*_t)\geq 0, F_{d,t}(a^d_t,b_t)-F_{d,t}(a^d_t,b^*_t)< 0\}$,\\
\normalsize
$\mathcal{S}_{dn,t}=\mathcal{S}_{d,t}(\psi_t,a^d_t) \cap {S}_{d,t}(\psi_t,a^n_t)$,
\begin{equation}
W=\frac{F_{d,t}(a^d_t,b^m_t)-F_{d,t}(a^d_t,b^*_t)}{F_{d,t}(a^n_t,b^m_t)-F_{d,t}(a^n_t,b^*_t)},\nonumber
\end{equation}
 \[b^m_t=
 \begin{cases}
 \arg \min_{b_t \in \mathcal{B}_{1,t}(b^*_t)} \frac{F_{d,t}(a^d_t,b_t)-F_{d,t}(a^d_t,b^*_t)}{F_{d,t}(a^n_t,b_t)-F_{d,t}(a^n_t,b^*_t)},& \text{if} \hspace{0.1 cm} C_1(b^*_t),\\
\\
 \arg \max_{b_t \in \mathcal{B}_{2,t}(b^*_t)} \frac{F_{d,t}(a^d_t,b_t)-F_{d,t}(a^d_t,b^*_t)}{F_{d,t}(a^n_t,b_t)-F_{d,t}(a^n_t,b^*_t)},& \text{if} \hspace{0.1 cm} C_2(b^*_t), \
\end{cases}
\]
\normalsize
the condition $ C_1(b^*_t)=\mathcal{B}_{1,t}(b^*_t)  \neq \phi, \mathcal{B}_{1,t}(b^*_t) = \phi$, the condition $C_2(b^*_t)=  \mathcal{B}_{2,t}(b^*_t) \neq \phi$,
and the set $\mathcal{Q}_d$ is defined in (\ref{kd}).
}
\label{propconn}
\end{proposition}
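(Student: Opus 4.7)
The plan is a backward-induction, per-stage argument showing that under the listed conditions, the attacker's FSE mixed strategy assigns zero probability to every disconnecting action. Assuming inductively that connectivity is preserved at all stages $t+1,\ldots,T$, the continuation values $\Omega_{a,t+1}$ and $\Omega_{d,t+1}$ are well-defined, so the analysis collapses into a single-stage Stackelberg problem whose payoffs are $F_{a,t}$ and $F_{d,t}$ (which already incorporate future value). Connectivity at stage $t$ then holds exactly when $q^*_{a^d_t}=0$ for every $a^d_t\in\mathcal{Z}_D(b^*_t)$, so the goal reduces to proving this at each stage separately.

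First, I would fix a disconnecting action $a^d_t\in\mathcal{Z}_D(b^*_t)$ together with the non-disconnecting witness $a^n_t$ guaranteed by the hypothesis, and suppose toward contradiction that $q^*_{a^d_t}>0$. The idea is to build a feasible perturbation $\tilde{\boldsymbol{q}}_t$ of the attacker's FSE strategy that transfers a small mass $\epsilon>0$ from $a^d_t$ onto $a^n_t$ and strictly increases the attacker's objective in (\ref{mixedStacklinear}), contradicting optimality. Feasibility here means that $b^*_t$ still lies in $\mathcal{R}^d(\tilde{\boldsymbol{q}}_t)$, i.e. for every alternative $b_t\in\mathcal{S}_{dn,t}$,
\[
\sum_{a_t}\tilde{q}_{a_t}\bigl(F_{d,t}(a_t,b_t)-F_{d,t}(a_t,b^*_t)\bigr)\le 0.
\]
The shift changes the left-hand side only through the coordinates $a^n_t$ and $a^d_t$, by an amount $\epsilon\bigl[\Delta^n(b_t)-\Delta^d(b_t)\bigr]$, where $\Delta^n(b_t)=F_{d,t}(a^n_t,b_t)-F_{d,t}(a^n_t,b^*_t)$ and $\Delta^d(b_t)=F_{d,t}(a^d_t,b_t)-F_{d,t}(a^d_t,b^*_t)$ are exactly the quantities whose signs define $\mathcal{B}_{1,t}$, $\mathcal{B}_{2,t}$, $\mathcal{B}_{3,t}$.

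Second, I would run the case analysis over those three sets. By definition $\mathcal{B}_{3,t}(b^*_t)$ is precisely the set of defender deviations for which the shift can violate a best-response inequality (nonnegative $\Delta^n$, strictly negative $\Delta^d$), so condition 1 makes feasibility vacuous and any $\epsilon$ works. When $\mathcal{B}_{3,t}(b^*_t)$ is nonempty, the maximum admissible $\epsilon$ is controlled by the tightest ratio $\Delta^d(b_t)/\Delta^n(b_t)$ from $\mathcal{B}_{3,t}(b^*_t)$; sorting the ratios from $\mathcal{B}_{1,t}(b^*_t)$ and $\mathcal{B}_{2,t}(b^*_t)$ in the same way and comparing them to this threshold is exactly what conditions 3 and 4 do, and they guarantee that a strictly positive shift still keeps $b^*_t$ best, with the truncation factor equal to $W$. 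Condition 5 handles the degenerate case where $\mathcal{B}_{1,t}(b^*_t)\cup\mathcal{B}_{2,t}(b^*_t)=\emptyset$, where no ratio constraint binds and any $\epsilon$ is feasible. Once feasibility of some $\epsilon>0$ is established, the change in the attacker's expected payoff equals $\epsilon\bigl(F_{a,t}(a^n_t,b^*_t)-F_{a,t}(a^d_t,b^*_t)\bigr)$ in the unconstrained cases (1 and 5), or $\epsilon\bigl(W\cdot F_{a,t}(a^n_t,b^*_t)-F_{a,t}(a^d_t,b^*_t)\bigr)$ after absorbing the feasibility scaling in the binding case (condition 2); both are strictly positive by the hypothesis, yielding the required contradiction.

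The main obstacle will be the careful bookkeeping in identifying, after perturbation, which defender incentive constraint binds first. This requires separating positive from negative denominators in the ratios $\Delta^d(b_t)/\Delta^n(b_t)$, which is exactly what the partition into $\mathcal{B}_{1,t}$, $\mathcal{B}_{2,t}$, $\mathcal{B}_{3,t}$ encodes, and then verifying that the scaling factor $W$ appearing in condition 2 is really the maximum feasible shift ratio rather than just an upper bound. Once this classification is written out cleanly, the five listed conditions read off as exactly the sign-pattern regimes for which a strictly improving, connectivity-preserving perturbation survives, and the induction closes.
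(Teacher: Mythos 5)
Your overall reduction coincides with the paper's: connectivity at stage $t$ holds precisely when every $a^d_t \in \mathcal{Z}_D(b^*_t)$ receives zero probability in the stage-$t$ linear program (\ref{mixedStacklinear}), whose coefficients $F_{a,t}$ and $F_{d,t}$ already absorb the continuation values, so the multistage structure plays no further role. At that point the paper simply invokes Corollary 4 of \cite{zerolinear} (conditions for a variable to vanish in any optimal solution of an LP) and translates $I_1, I_2, I_3, H$ into $\mathcal{B}_{1,t}, \mathcal{B}_{2,t}, \mathcal{B}_{3,t}, W$; you instead try to prove that dominance criterion from scratch by a mass-transfer perturbation. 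The attempt is legitimate in principle, but your case analysis misassigns the roles of the three sets. Writing $\Delta^n(b_t)=F_{d,t}(a^n_t,b_t)-F_{d,t}(a^n_t,b^*_t)$ and $\Delta^d(b_t)=F_{d,t}(a^d_t,b_t)-F_{d,t}(a^d_t,b^*_t)$ as you do, a constraint in $\mathcal{B}_{3,t}(b^*_t)$ has $\Delta^n(b_t)\ge 0$ and $\Delta^d(b_t)<0$, so removing mass $\epsilon$ from $a^d_t$ and adding $\alpha\epsilon$ to $a^n_t$ changes its left-hand side by $\epsilon\bigl(\alpha\Delta^n(b_t)-\Delta^d(b_t)\bigr)\ge -\epsilon\Delta^d(b_t)>0$ for \emph{every} $\alpha\ge 0$: such a constraint is fatal, not merely rate-limiting, which is exactly why condition 1 requires $\mathcal{B}_{3,t}(b^*_t)=\phi$. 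The admissible transfer ratios are instead bounded above by $\mathcal{B}_{1,t}$ (where $\alpha\le\Delta^d/\Delta^n$) and below by $\mathcal{B}_{2,t}$ (where $\Delta^n<0$ flips the inequality), and conditions 3--4 express the compatibility of those two families of bounds with one another --- not, as you write, a comparison of $\mathcal{B}_{1,t}$ and $\mathcal{B}_{2,t}$ ratios against a threshold supplied by $\mathcal{B}_{3,t}$. Your statement that a nonempty $\mathcal{B}_{3,t}$ ``controls the maximum admissible $\epsilon$'' would derail the argument if carried out.

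There is a second unresolved step. To make the $W$-weighted objective test of condition 2 appear, the transfer must be scaled: remove $\epsilon$ from $q_{a^d_t}$ and add $W\epsilon$ to $q_{a^n_t}$. But then $\sum_{a_t}q_{a_t}$ changes by $\epsilon(W-1)$ and the normalization constraint of (\ref{mixedStacklinear}) is violated; renormalizing rescales the objective and the inequality $W\,F_{a,t}(a^n_t,b^*_t)>F_{a,t}(a^d_t,b^*_t)$ no longer falls out directly. An equal-mass transfer preserves normalization but then yields the unweighted comparison and a different feasibility test, so the phrase ``after absorbing the feasibility scaling'' is doing real work that is not shown. (To be fair, the paper inherits the same tension by applying a corollary stated for $\max c^{\top}x$ subject to $Ax\le b$, $x\ge 0$ to a program carrying an equality constraint, but it at least delegates the dominance argument to the reference rather than reproving it.) To close your version you would need to handle the simplex constraint explicitly in the exchange, or follow the paper and quote \cite{zerolinear} directly.
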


\begin{proof}
Since the attacker uses mixed strategies in our problem,  disconnection does not occur at stage $t$ if $q_{a_t}=0$  for every $a_t$ in  $\mathcal{Z}_D(b'_t)$, i.e. the actions in $\mathcal{Z}_D(b^*_t)$ are dominated.  In  \cite[Corollary 4]{zerolinear}, the conditions are derived for the case in which a variable has a zero value in any optimal solution for a given linear program. In particular, given a linear program of the form $\max \sum_{j=1}^n c_j x_j \hspace{0.1 cm} \text{s.t.}  \hspace{0.1 cm} \sum_{j=1}^n a_{ij}x_j \leq b_j$, $x_j \geq 0$, the variable $x_r=0$ in the optimal solution of the linear program if there exists $q\neq r$ such that one of the following conditions hold:
\begin{enumerate}
\item $I_3 \neq \phi$,
\item If $I_1 \cup I_2 \neq \phi$ then $H c_q \leq c_r$
\item If $I_1 \neq \phi$ and $I_2 \neq \phi$ then $\min_{i \in I_1} \lfloor\frac{a_{ir}}{a_{iq}}\rfloor \geq \max_{i \in I_2}\frac{a_{ir}}{a_{iq}}$
\item if $I_1 \cup I_2 = \phi$ then $c_q>0$,
\end{enumerate}
where $I_1=\{i|a_{iq}>0  \hspace{0.1 cm} \textrm{and}  \hspace{0.1 cm} a_{ir} \geq 0\}$, $I_2=\{i|a_{iq}<0  \hspace{0.1 cm} \textrm{and}  \hspace{0.1 cm} a_{ir} \leq 0\}$,  $I_3=\{i|a_{iq} \geq 0  \hspace{0.1 cm} \textrm{and}  \hspace{0.1 cm} a_{ir} < 0\},$
\begin{equation}
  k  =\left\{
                \begin{array}{ll}
               \arg \min \frac{a_{ir}}{a_{iq}}  \hspace{0.2 cm}  \textrm{if} \hspace{0.1 cm} I_1 \neq \phi,\\\\
                \arg \max \frac{a_{ir}}{a_{iq}}    \hspace{0.2 cm}   \textrm{if}  \hspace{0.1 cm}  I_1 = \phi \hspace{0.1 cm}  \textrm{and} \hspace{0.1 cm}  I_2 \neq \phi\\
                \end{array}
              \right. \label{defset} \nonumber
  \end{equation}

\begin{equation}
  H  =\left\{
                \begin{array}{ll}
               \lfloor \frac{a_{kr}}{a_{kq}} \rfloor  \hspace{0.2 cm}  \textrm{if} \hspace{0.1 cm} I_1 \neq \phi,\\\\
                \arg \max \frac{a_{kr}}{a_{kq}}    \hspace{0.2 cm}   \textrm{if}  \hspace{0.1 cm}  I_1 = \phi \hspace{0.1 cm}  \textrm{and} \hspace{0.1 cm}  I_2 \neq \phi\\
                \end{array}
              \right. \label{defset} \nonumber
  \end{equation}

Thus, by applying these conditions to each $q_{a_t}$ in $\mathcal{Z}_D(b^*_t)$ our proposed linear program in (\ref{mixedStacklinear}),  the result follows. \vspace{-0.2 cm}
\end{proof}

Proposition $\ref{propconn}$ shows that maintaining connectivity at each time epoch $t$ depends on the payoffs of the attacker and the defender. Further, the payoff of the defender in (\ref{dpayoff}) depends on the IoBT network parameters. For example,  the time required to deliver the information to the GS is a function of the IoBT network capacity, as shown in Appendix B, which can be controlled by adjusting the transmission bandwidth in a wireless setting.
Thus, in order to maintain connectivity at each time epoch $t$, the IoBT operator adjusts its payoffs such that one of the conditions in  Proposition $\ref{propconn}$ is met.


\section{Simulation Results and Analysis}
For our simulations, we consider an IoBT network containing $1000$ devices of seven types: Type 1 corresponds to a radiological sensor, type $2$ corresponds to a chemical sensor, type $3$ corresponds to an infrared (IR) camera, type $4$ corresponds to an explosives detector, type 5 corresponds to a surveillance camera, type 6 corresponds to a mititary robot containing a chemical sensor, a radiological sensor, an infrared camera and an explosives detector, and type $7$ corresponds to a military unmanned vehicle containing a surveillance camera, an IR camera, a radiological sensor and a chemical sensor. The number of subareas considered is $H=5$. The number of LSs available in each subarea is $L_h=2$, the weight of each LS $i$ is set to $w_{L,i}=15$, which is chosen to be greater than the weight of any of the devices at a lower hierarchy level. The threshold on the number of sensors in each cluster $\mathcal{D}_{jh}$ is set to $N_{\textrm{th},jh}=15$. The normalizing coefficients are set to: $\mu=100$, $\nu=1$, and $\lambda=1$. The costs of deploying a device of type $\tau$ and an LS are set to be $d_\tau=0.5N_\tau$ and $d_L=50$. All normalization constants and cost values are chosen such that the costs are comparable to the number of sensors.
For detailed analysis, the following scenarios are considered:
\begin{enumerate}
\item The cost of compromising an LS $c_L$ is varied  between 0 and 200 in steps of 50. The considered value of the cost $c_{aL}$ of determining the activated LS by the attacker is set to $0$. The maximum number of stages considered is $T=1,2,3$, where $T=1$ corresponds to the case of Stackelberg equilibrium with no feedback (NFSE). The cost of determining the CH is set to be $c_{CH}=20$ while the cost of compromising a device of type $\tau$ is set to be  $c_{\tau}=0.5 N_\tau$.
\item The cost of finding the CH is varied between $0$ and $100$ in steps of $20$. The costs of compromising an LS and determining the activated LS $(c_{aL},c_L)$ is set to $(100,50)$. The maximum number of stages considered are $T=1,2,3$.
\item The maximum number of stages $T$ is varied between $1$ and $5$ in steps of $1$. The considered cost values are $c_{\tau}=0.5 N_\tau$ and $c_{CH}=20$, and the LS costs $(c_{aL},c_L)$ are to set to $(0,50)$ and $(150,50)$, respectively. 
\end{enumerate}

\begin{figure}[t]
	\centering
	\includegraphics[width=7 cm,height=4cm,angle=0]{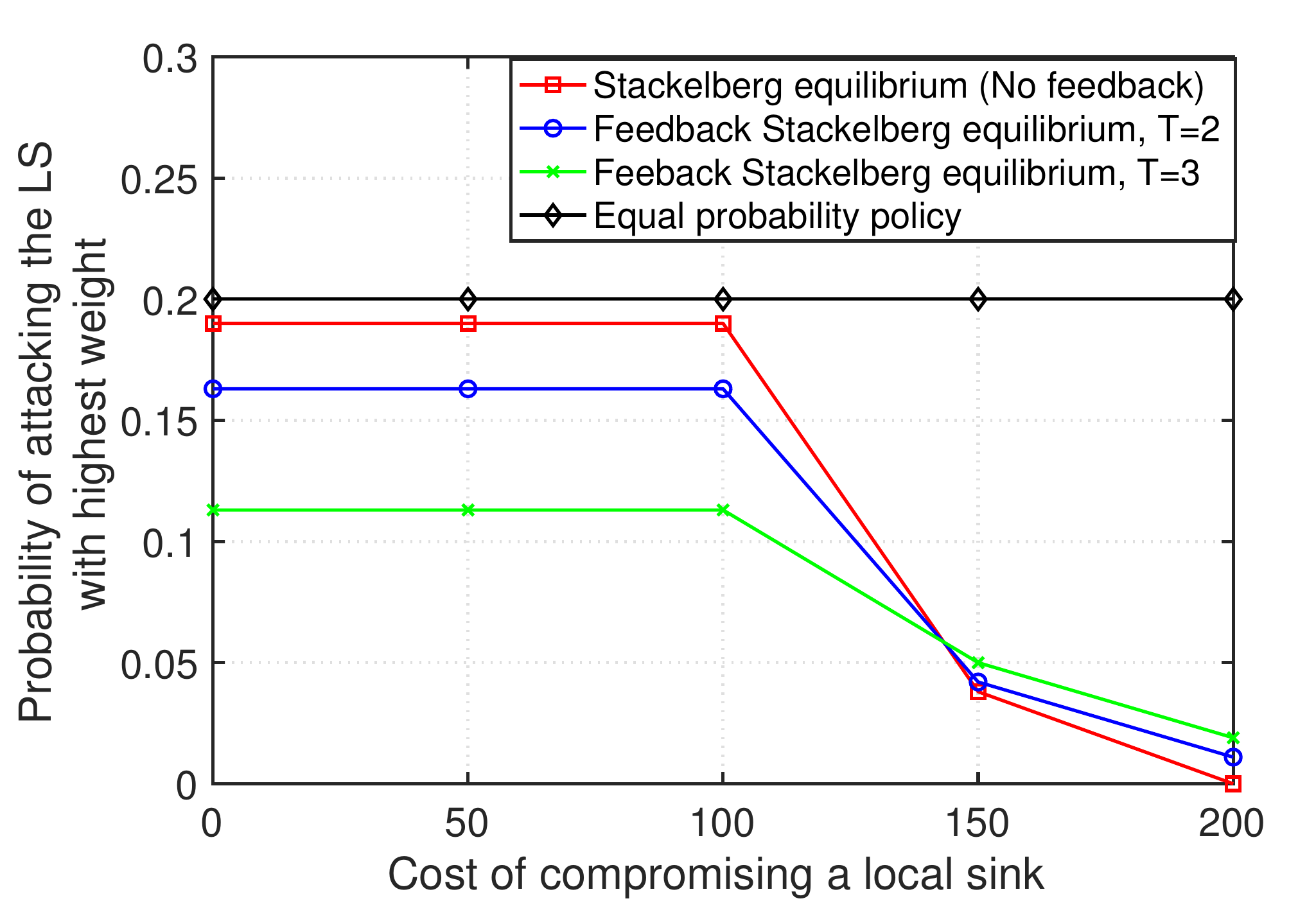}
	\caption{ Probability of attacking the LS with the highest weight vs the cost of compromising an LS
	}\vspace{-0.4 cm}\label{LSprob}
\end{figure}

\begin{figure}[t]
	\centering
	\includegraphics[width=7 cm,height=4cm,angle=0]{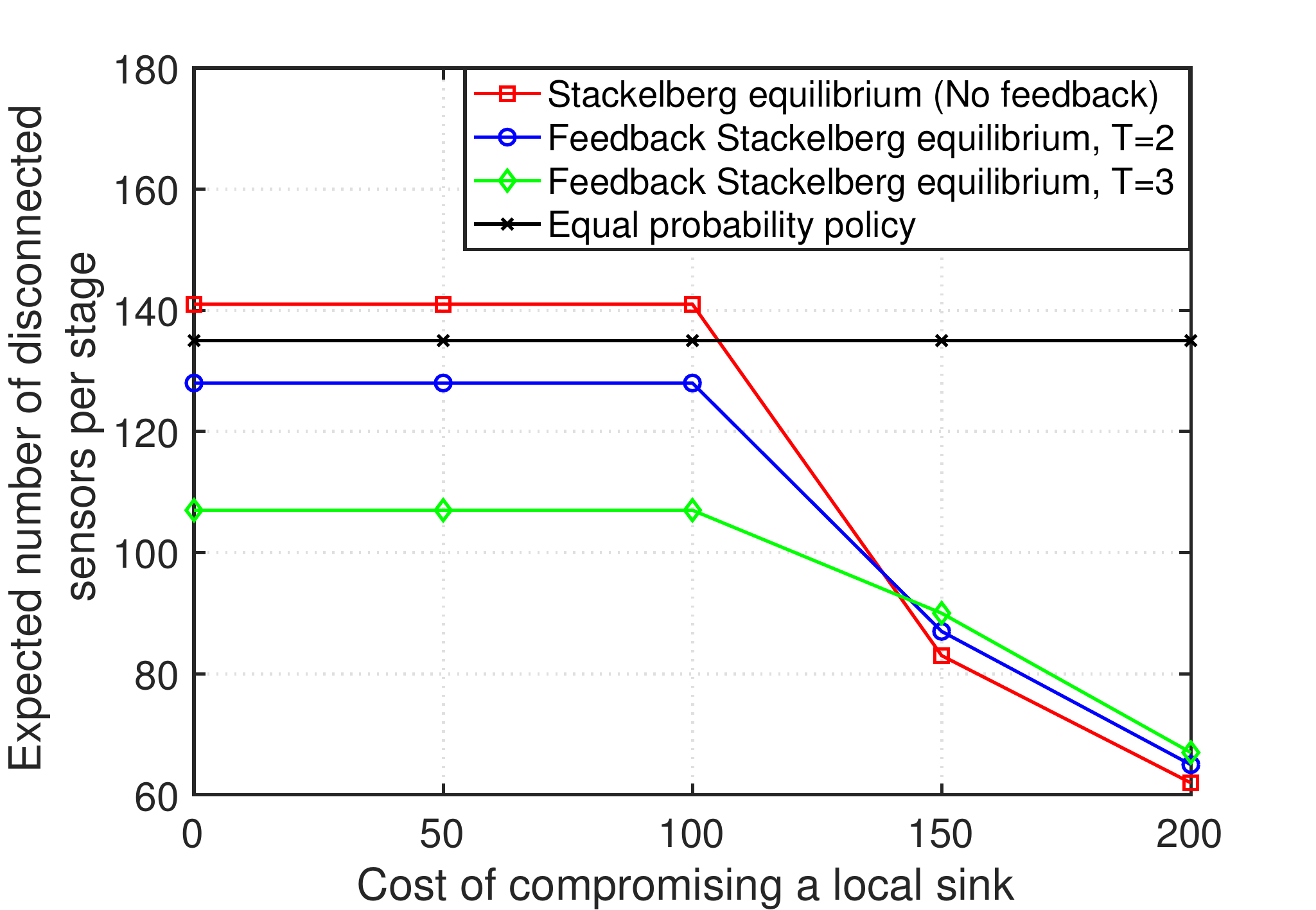}
	\caption{ Expected number of disconnected nodes respectively vs the cost of compromising an LS
	}\vspace{-0.4 cm}\label{LScost}
\end{figure}

Fig.  \ref{LSprob} shows, for the NFSE, the FSE, and a baseline policy which assigns equal probabilities to attacking the activated LSs, the average probability $p_H$ of attacking the LS in the subarea with the highest weight as a function of the cost $c_L$ of compromising an LS.
 Fig.  \ref{LSprob} first shows that the NFSE probability of attacking the activated LS with the highest weight is 0.19 for $c_L$ values less than $100$. In this case, the attacker's payoff obtained from attacking any activated LS is considerably higher than the attacker's payoff achieved from attacking any other device. Thus, the attacker chooses to compomise only the five activated LSs. As a result, the NFSE mixed strategy of the attacker is comparable to the equal probability policy. As $c_L$ increases to $150$, the payoffs achieved by attacking an activated LS and the CHs become comparable, and the attacker chooses to attack both the LSs and the CHs. Thus, the probability of attacking the LS with highest weight decreases to 0.09. When $c_L$ increases further to $200$, the payoff resulting from attacking an activated LS becomes considerably lower than the payoff achieved by attacking any of the CHs, and, thus, the attacker chooses to attack only the CHs.
Next, when using the FSE with $T=2$, the FSE probability of attacking the activated LS with the highest weight decreases to $0.163$ for $c_L$ values less than $100$. This is due to the fact that, by using a FSE, the attacker, as well as the defender, will take into account the  expected sum of payoffs from $t+1$ to $T$, when computing the FSE probabilities at time $t$ according to (19). Thus, the attacker's payoff received from an attack on any IoBT node increases compared to the NFSE case, which results in increasing the probability of attacking some of the IoBT nodes which are not LSs. Hence, the probability of attacking the LS having the highest weight decreases. Then, the probability of attacking the LS with the highest weight decreases as $c_L$ increases to $200$. However, the probability is higher than the case of the NFSE. This is because the expected sum of payoffs achieved by attacking the LS with highest weight increases compared to the NFSE according to (19).
Finally,  when using  the FSE with $T=3$ and for $c_L$ values less than $100$, the probability $p_H$ decreases to $0.112$ compared to the case with $T=2$. This is because the expected payoff of attacking any IoBT node increases with the number of stages according to (19), which causes $p_H$ to decrease. Then, as $c_L$ increases to $200$, $p_H$ decreases, but it remains higher than the case when $T=2$, since the expected sum of payoffs received from attacking the LS with the highest weight increases with $T$.

Fig. \ref{LScost} shows the average number $N_D$ of disconnected sensors per stage  resulting from the NFSE, the FSE, and the equal probability policy as function of the cost of compromising an LS. First, using the NFSE and when $c_L \leq 100$, the expected number of disconnected sensors is $141$ since the attacker chooses to compromise only the activated LSs. Also, the expected number of disconnected sensors is slightly higher than when the attacker chooses to attack each of the activated LSs with equal probability, since this policy is not optimal. Fig. \ref{LScost} also shows that, for $c_L=150$, the expected number of disconnected sensors decreases to $83$ since the attacker chooses to compromise either the LSs or the CHs. Thus, the value of the expected number of disconnected sensors drops below the value of the equal probability policy.  When $c_L$ increases to $200$, the expected number of disconnected sensors decreases to $62$ since the attacker will now compromise CHs. Next, using FSE with $T=2$ and for $c_L < 100$, $N_D$ decreases by $9\%$ compared to NFSE, since $p_H$ decreases as shown in Fig. \ref{LSprob}. Then, as $c_L$ increases to $200$, the expected number of disconnected sensors decreases yet becomes higher than the NFSE case since $p_H$ is higher according to Fig.\ref{LSprob}.
Finally, when using the FSE with $T=3$, the expected number $N_D$ is $107$ when $c_L < 100$. This is because, for the considered cost values, the probability $p_H$ decreases with $T$ as shown in Fig. \ref{LSprob}. Then, as $c_L$ increases to $200$, the expected number of disconnected sensors decreases, but it remains greater than the case in which $T=2$. This is due to the fact that the probability $p_H$ increases with $T$ when $c_L$ is greater than $150$ as shown in Fig. \ref{LSprob}.
\begin{figure}[t]
	\centering
	\includegraphics[width=7 cm,height=4 cm,angle=0]{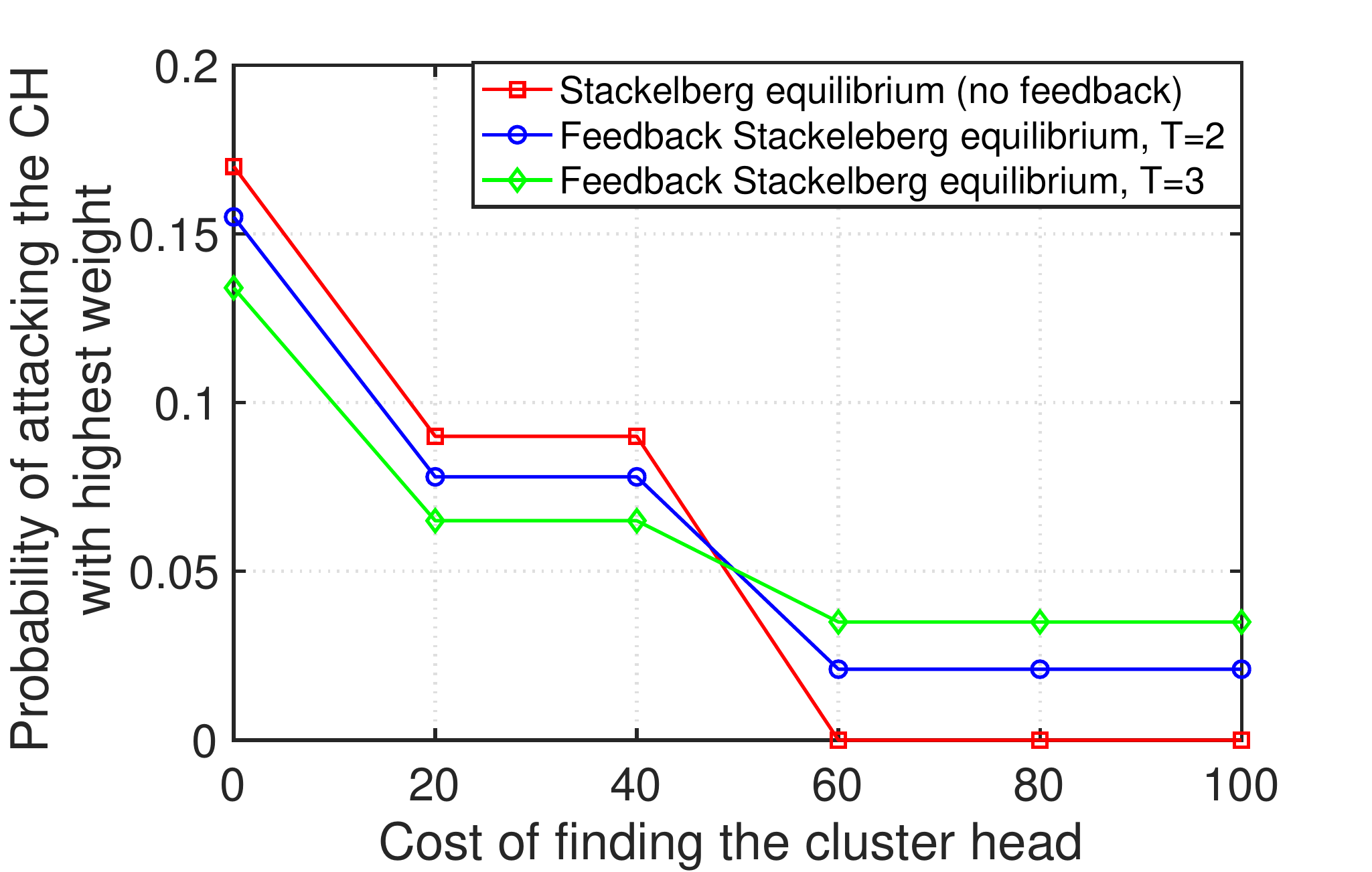}
	\caption{Probability of attacking the CH with the highest weight vs the cost of compromising an LS
	}\vspace{-0.4 cm}\label{CHprob}
\end{figure}

\begin{figure}[t]
	\centering
	\includegraphics[width=7 cm,height=4 cm,angle=0]{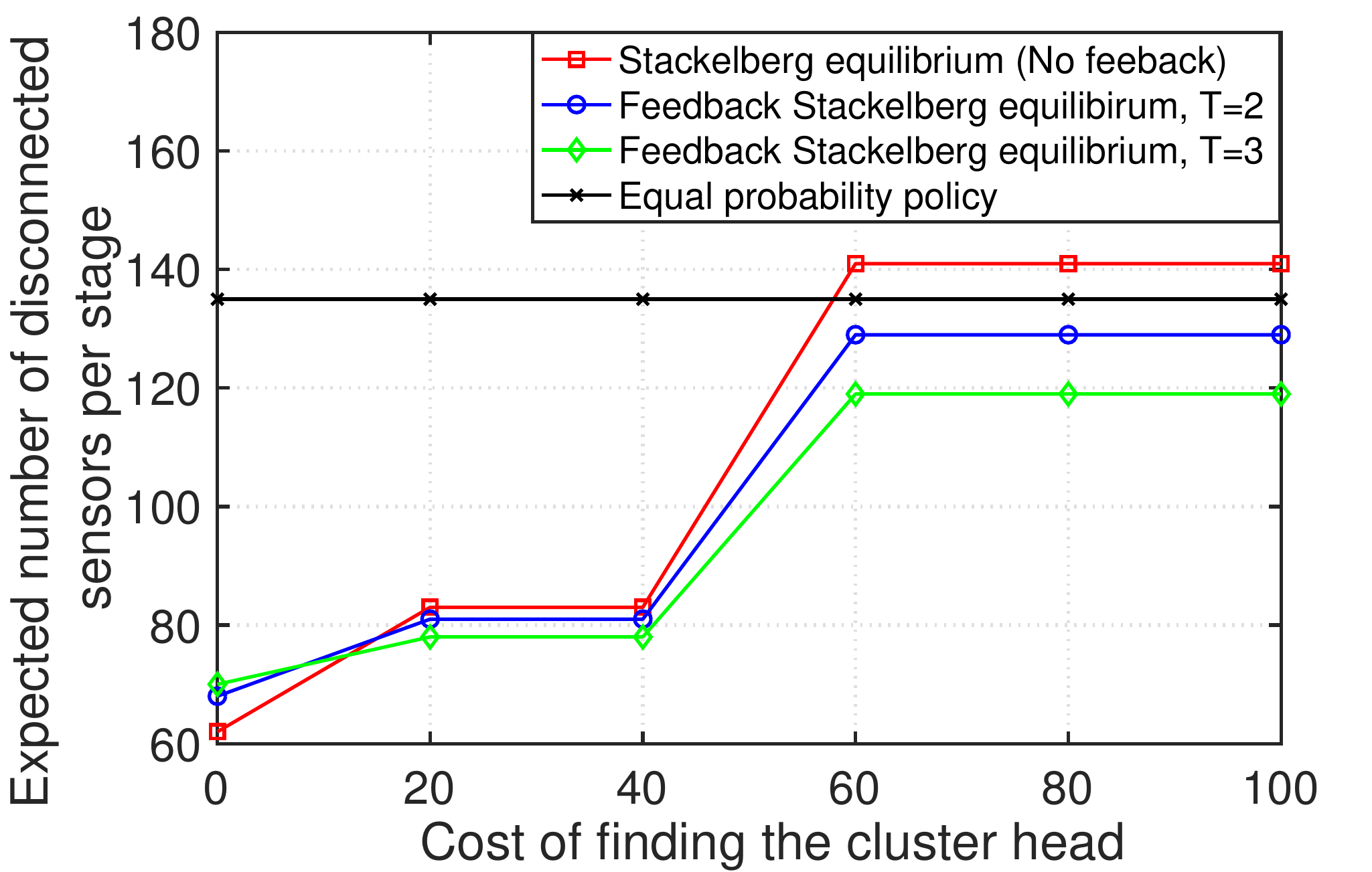}
	\caption{Expected number of disconnected sensors per stage vs the cost of finding the CH
	}\vspace{-0.4 cm}\label{CHcost}
\end{figure}

Fig \ref{CHprob} shows the average probability $p_{c,\max}$ of attacking the CH of the cluster with highest number of sensors resulting from the NFSE and the FSE versus the cost of finding the CH. Using NFSE  and  when $c_{CH}$ is $0$, $p_{c,\max}$ is 0.17 since the payoffs resulting from an attack on the CHs are the highest.  As the value of $c_{CH}$  increases to $40$, $p_{c,\max}$ decreases to 0.09 since the attacker chooses to compromise both LSs and CHs. Then,  $p_{c,\max}$ becomes zero as $c_{CH}$ increases up to 100 since the payoffs achieved by compromising the CHs will be considerably lower than the payoffs obtained from compromising the LSs. Hence, in this case, the attacker will be compromising the LSs.
Next, using the FSE with $T=2$ and when $c_{CH}$ is $0$,  $p_{c,\max}$  is 0.155. Then, when $c_{CH}$ increases up to $40$,  $p_{c,\max}$ decreases to $0.078$. Thus, when $c_{CH}$ is less than 40, the value of $p_{c,\max}$ is
less than its value when NFSE is used.  This is because the expected sum of payoffs achieved by attacking any IoBT node is higher than the NFSE according to (19), which yields a decrease in the probability $p_{c,\max}$. As $c_L$ increases from $40$ to $100$, $p_{c,\max}$ decreases yet its value becomes higher than the NFSE. This is because the expected sum of payoffs received from attacking a CH, when using the FSE, is higher than the case in which the NFSE is used, which causes the probability to remain positive.
Finally, when using the FSE with  $T=3$, $p_{c,\max}$ varies similar to the case in which $T=2$. However, for $c_{CH}$  less than $40$, $p_{c,\max}$ is smaller than the case when $T=2$ since the expected sum of payoffs achieved by attacking any IoBT node increases with $T$. Also, for $c_{CH}$ greater than $40$, $p_{c,\max}$ is higher than the case when $T=2$  since the expected sum of payoffs obtained from attacking a CH increases with $T$ according to (19).  Finally, Fig. \ref{CHprob}  shows that, for the equal probability policy, the probability of attacking a CH is zero since the attacker only compromises the activated LSs.

Fig. \ref{CHcost} shows, for both the FSE and the equal probability policy, the average number of disconnected sensors $N_D$ per stage versus the cost of finding the CH. When using the NFSE and when $c_{CH}=0$, the expected number of disconnected sensors is $63$ since the attacker will be  compromising the CHs. Then, the expected number of disconnected sensors is $82$ as the value of $c_{CH}$ increases to $40$, since the attacker  will choose proper (non-deterministic) mixed strategies over both the LSs and CHs. Then, as $c_{CH}$ becomes higher than $40$, the expected number of disconnected sensors increases to $141$ and exceeds the value of the equal probability policy since the attacker will be compromising the activated LSs and the equal probability policy is not optimal. 
Next, when using the FSE with $T=2$ and when $c_{CH}=0$, the expected number of disconnected sensors $N_D$ is $65$. In this case, the value of $N_D$ is slightly higher than the one resulting from the NFSE since the probability of attacking an LS is positive with FSE. As  $c_{CH}$ increases to $40$, the value of $N_D$ increases to $81$. Then, as $c_{CH}$ increases to $100$, the value of $N_D$ increases to $129$. For $c_{CH}$ values higher than $40$, the value of $N_D$ resulting from FSE is lower than the NFSE case since the probability of attacking a device  which is not a CH is positive with FSE. When using the FSE with $T=3$, the value of $N_D$ varies as function of $c_{CH}$ in a similar way as when $T=2$. Yet, when $c_{CH}$ is $0$, the value of $N_D$ is slightly higher than when $T=2$. This is because in this case the probability of attacking an LS increases with $T$. Also, for $c_{CH}$ values greater than $40$, the value of $N_D$ is lower than when $T=2$ since the probability of attacking a device which is not a CH increases with $T$ for the considered cost values.

\begin{figure}[t]
	\centering
	\includegraphics[width=7 cm,height=4cm,angle=0]{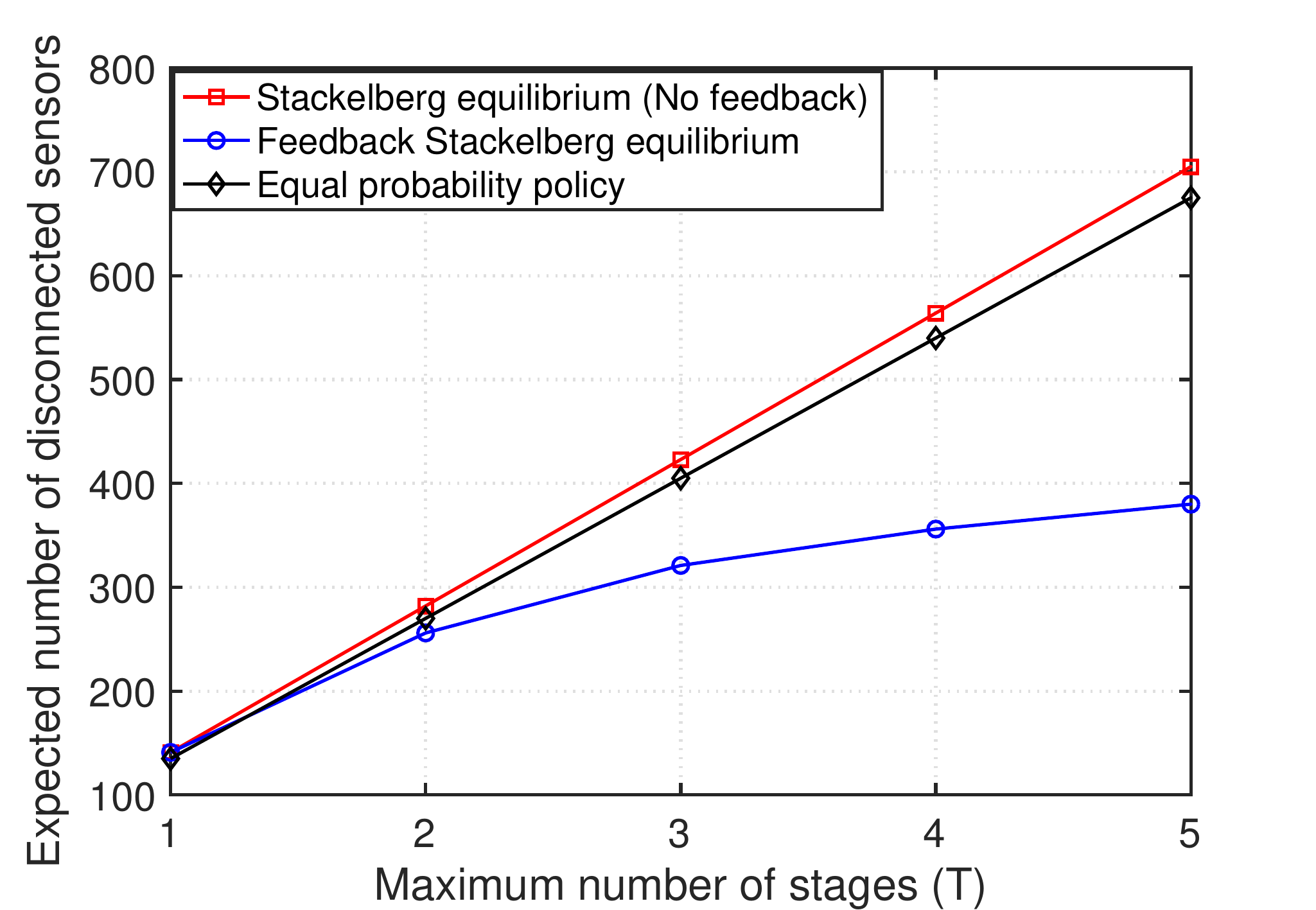}
	\caption{ Expected number of disconnected nodes respectively vs number of stages
	}\vspace{-0.4 cm}\label{largeT}
\end{figure}

\begin{figure}[t]
	\centering
	\includegraphics[width=7 cm,height=4cm,angle=0]{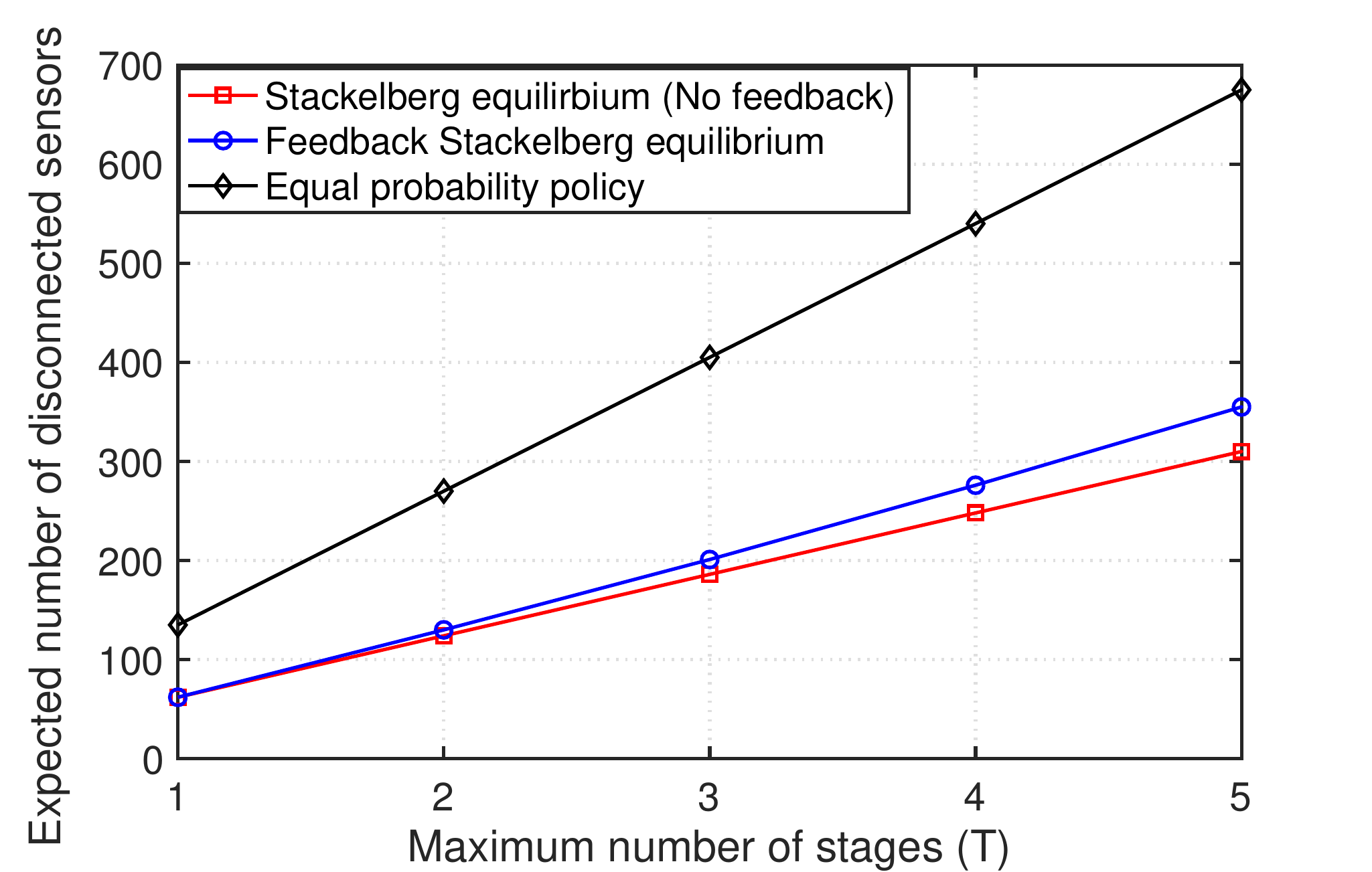}
	\caption{ Expected number of disconnected nodes respectively vs number of stages
	}\vspace{-0.4 cm}\label{smallT}
\end{figure}

Figs. \ref{largeT} shows, for the case when the LSs costs  $(c_{aL},c_L)$ are  $(0,50)$, the expected number of disconnected sensors versus the maximum number of stages $T$ when FSE, the NFSE, and the equal probabilitiy policy are used, respectively. The NFSE solution corresponds to finding the Stackelberg equilibrium for a one stage game played at each time epoch $t$ ($1 \leq t \leq T$).  In Fig. \ref{largeT},   the expected number of disconnected sensors increases with the maximum number of stages using the three solutions. However, the expected number of disconnected sensors, when the FSE is used, increases at a rate considerably slower than when either the NFSE or the equal probability power policy are used. Thus, the results confirm that, using the FSE, the number of disconnected sensors per stage decreases with $T$ as opposed to the  NFSE in which the attacker chooses  its mixed strategy only over the LSs.
The decrease in the number of disconnected sensors when using the FSE reaches up to $43\%$ compared to the equal probability policy and up to  $46\%$ compared to the Stackelberg equilibrium with no feedback, when  $T$ is $5$. 

Fig. \ref{smallT} shows, for the case when the LSs costs  $(c_{aL},c_L)$ are  $(150,50)$, the expected number of disconnected sensors versus the maximum number of stages $T$ when FSE, the NFSE solution, and the equal probabilitiy policy are used, respectively.
In this case,  the expected number of disconnected sensors also increases with the maximum number of stages using the three solutions. However, the number of disconnected sensors using the FSE is slightly higher than the case of NFSE. This is due to the fact that, in the case of no feedback, the attacker's mixed strategy does not include attacking the LSs  whereas, using the FSE, the attacker's mixed strategy includes attacking the LSs, and the probability of attacking the LSs increases with the number of stages. The increase in the number of disconnected sensors reaches up to $14\%$ when $T$ is $5$. Nonetheless, the  number of disconnected sensors, using the FSE, decreases by $47\%$ compared to the equal probability policy.

Thus, Figs. \ref{largeT} and \ref{smallT} show that the FSE yields a significant decrease in the number of disconnected sensors when the LS costs are low. The performance is slightly degraded compared to the NFSE when the costs of attacking the LSs increase. However, the number of disconnected sensors remains significantly lower than the equal probability policy.

\section{Conclusion}

In this paper, we have considered the connectivity problem in an Internet of Battlefield Things network in which an adversary attempts to cause disconnection by compromising one of the IoBT nodes at each time epoch while a defender tries to restore the connectivity of the IoBT by deploying new IoBT nodes or changing the roles of nodes. We have formulated the problem as a multistage Stackelberg game in which the attacker is the leader and the defender is the follower. Due to the reliance of the attacker's and the defender's actions on the network state at each stage $t$, we have adopted the feedback Stackelberg equilibrium to solve the game.
We have obtained sufficient condition to maintain connectivity at each stage $t$ when the FSE solution is used.
 Numerical results show that the expected number of disconnected sensors, when the FSE solution is used, decreases up to $46\%$ compared to a baseline scenario in which a Stackelberg game with no feedback is used,
 and up to $43\%$ compared to a baseline equal probability policy.

%




\def\baselinestretch{0.95}

\appendix
\section*{Appendix A: Summary of Notation}

\begin{description}

\mysymbol{\zzz}{\mathcal{D}}{Set of IoBT devices.}
\mysymbol{\Txxx}{\mathcal{K}}{Set of devices' types.}
\mysymbol{\vvvv}{\mathcal{K}}{Size of set $\mathcal{K}$}
\mysymbol{\vvvv}{\mathcal{I}}{Set of information types.}
\mysymbol{\vvvv}{\mathcal{M}}{Size of set $\mathcal{I}$}
\mysymbol{\vvvvv}{N_\tau}{Number of devices of type $\tau$.}
\mysymbol{\vvvvv}{\mathcal{H}_\tau}{Information set of a device of type $\tau$.}
\mysymbol{\vvvvv}{A_h}{IoBT subarea $h$.}
\mysymbol{\vvvvv}{\mathcal{D}_{jh}}{Cluster sensing information $j$ in subarea $A_h$.}
\mysymbol{\vvvvv}{N_{th,jh}}{Minimum required number of sensors in cluster $\mathcal{D}_{jh}$.}
\mysymbol{\vvvvv}{w_i}{Weight of device $i$.}
\mysymbol{\vvvvv}{w_{L,i}}{Weight of LS $i$.}
\mysymbol{\vvvvv}{\mathcal{L}_h}{Set of LSs in subarea $A_h$.}
\mysymbol{\vvvvv}{c_\tau}{Cost of compromising a device of type $\tau$.}
\mysymbol{\vvvvv}{c_L}{Cost of compromising an LS.}
\mysymbol{\vvvvv}{c_{CH}}{Cost of determining the cluster head.}
\mysymbol{\vvvvv}{c_{aL}}{Cost of determining the activated LS.}
\mysymbol{\vvvvv}{d_\tau}{Cost of deploying a device of type $\tau$.}
\mysymbol{\vvvvv}{d_L}{Cost of deploying an LS.}
\mysymbol{\vvvvv}{\mathcal{P}}{Set of players.}
\mysymbol{\vvvvv}{\mathcal{T}}{Set of stages.}
\mysymbol{\vvvvv}{\mathcal{X}}{State space.}
\mysymbol{\vvvvv}{\mathcal{S}_{a,t}}{Attacker's strategy set at time $t$.}
\mysymbol{\vvvvv}{\mathcal{S}_{d,t}}{Defender's strategy set at time $t$.}
\mysymbol{\vvvvv}{\psi_t}{State of the game at time $t$.}
\mysymbol{\vvvvv}{\psi_{a,t}}{Network observed by the attacker at time $t$.}
\mysymbol{\vvvvv}{\mathcal{D}_{a}(t)}{\hspace{0.1 cm}Set of devices in $\psi_{a,t}$.}
\mysymbol{\vvvvv}{\mathcal{D}_{a,jh}(t)}{\hspace{0.2  cm} Cluster $\mathcal{D}_{jh}$  in $\psi_{a,t}$.}
\mysymbol{\vvvvv}{f_{a,jh}(t)}{\hspace{0.1 cm} Index of the device that is CH of cluster $\mathcal{D}_{a,jh}(t)$.}
\mysymbol{\vvvvv}{ \mathcal{L}_{a,h}(t)}{\hspace{0.1 cm} Set of LSs in subarea $A_h$ at time $t$.}
\mysymbol{\vvvvv}{s_{a,h}(t)}{\hspace{0.1 cm} Index of the activated LS in $\mathcal{L}_{a,h}(t)$.}
\mysymbol{\vvvvv}{\psi_{d,t}}{Network observed by the defender at time $t$.}
\mysymbol{\vvvvv}{\mathcal{D}_{d}(t)}{\hspace{0.1 cm}Set of devices in $\psi_{d,t}$.}
\mysymbol{\vvvvv}{\mathcal{D}_{d,jh}(t)}{\hspace{0.3 cm}Cluster $\mathcal{D}_{jh}$ in $\psi_{d,t}$.}
\mysymbol{\vvvvv}{f_{d,jh}(t)}{\hspace{0.1 cm} Index of the device that is CH of cluster $\mathcal{D}_{d,jh}(t)$.}
\mysymbol{\vvvvv}{ \mathcal{L}_{d,h}(t)}{\hspace{0.1 cm} Set of LSs in subarea $A_h$ in $\psi_{d,t}$.}
\mysymbol{\vvvvv}{s_{a,h}(t)}{\hspace{0.1 cm} Index of the activated LS in $\mathcal{L}_{a,h}(t)$.}
\mysymbol{\vvvvv}{a_t}{Attacker's action at time $t$.}
\mysymbol{\vvvvv}{b_t}{Defender's action at time $t$.}
\mysymbol{\vvvvv}{a_{d,i}}{Attacking device $i$.}
\mysymbol{\vvvvv}{a_{L,lh}}{Attacking LS $l$ in subarea $A_h$.}
\mysymbol{\vvvvv}{b_{c,ijh}}{Assigning device $i$ to be the CH of $\mathcal{D}_{jh}$.}
\mysymbol{\vvvvv}{b_{d,\tau h}}{Deploying a device of type $\tau$ in subarea $A_h$.}
\mysymbol{\vvvvv}{b_{a,l h}}{Activating LS $l$ in subarea $A_h$.}
\mysymbol{\vvvvv}{\boldsymbol{q}_t}{Attacker's mixed strategy at time $t$.}
\mysymbol{\vvvvv}{P_{a,t}}{Attacker's payoff at time $t$.}
\mysymbol{\vvvvv}{P_{d,t}}{Defender's payoff at time $t$.}
\mysymbol{\vvvvv}{S_{D,t}}{Sum of weights of disconnected sensors at time $t$.}
\mysymbol{\vvvvv}{\Lambda_{t}}{Delay to deliver the information at time $t$.}

\end{description}
%

\section*{Appendix B: Expressions of Payoff Functions}
The expressions of $S_{D,t}(a_t,b_t)$,  $\Lambda_t(a_t,b_t)$ in terms of each pair of the attacker's and defender's pure strategies $a_t$ and $b_t$ are given as follows.
\vspace{0.3 cm}
\begin{itemize}
\item If $a_t=a_{d,i}$, $b_t=b_{c,kj'h'},$ 
\smaller
\begin{eqnarray}
&&\hspace{-1 cm}S_{D,t}(a_t, b_t)=\sum_{h=1}^H\sum_{j=1}^M I(i \in \mathcal{D}_{jh}(t))\big(x_{ijh}(t)W_{jh}(t)+\bar{x}_{ijh}(t)) \big)\nonumber\\
&&\hspace{0.7 cm}-z_{j'h'}(t)W_{j'h'}(t),\nonumber
\end{eqnarray}
\vspace{-0.5 cm}
\begin{eqnarray}
&&\hspace{-1 cm}\Lambda_t(a_t,b_t)=\max_{1 \leq h \leq H} \bar{z}_h(t) \max_{1 \leq j \leq M}(1-I(h=h',j=j'))\bar{z}_{jh}(t)\nonumber\\
&& \hspace{-0.5 cm}\times (I(i \in \mathcal{D}_{jh}(t))\bar{x}_{ijh}(t)\Lambda_{jh}(f_{jh}(t),s_{h}(t),\mathcal{D}_{jh}(t) \setminus \{i\})\nonumber\\
&& \hspace{-0.5 cm}+I(i \notin \mathcal{D}_{jh}(t))\Lambda_{jh}(f_{jh}(t),s_{h}(t),\mathcal{D}_{jh}(t) ))\nonumber\\
&&\hspace{-0.5 cm}
+ I(h=h',j=j')(I(i \in \mathcal{D}_{jh}(t))
\Lambda_{jh}(k,s_{h}(t),D_{jh}(t) \setminus \{i\})\nonumber\\
&&\hspace{-0.5 cm}+(I(i \notin \mathcal{D}_{j'h}(t))
\Lambda_{jh}(k,s_{h}(t),D_{jh}(t) ))+\Lambda_g(s_h(t)), \nonumber
\end{eqnarray}

\normalsize

where for any variable $x$, $\bar{x}=1-x$, \emph{I(.)} is the indicator function. In this part, all the network variables pertains to network $\psi_{a,t}$, and the index $a$ is dropped for ease of notation.  $x_{ijh}(t)$ is an indicator whether device $i$ is the CH in cluster $\mathcal{D}_{jh}(t)$, $f_{jh}(t)$ is the CH of $\mathcal{D}_{jh}(t)$, $s_h(t)$ is the activated LS in subarea $h$, $W_{jh}(t)$ is given by $W_{jh}(t)=N_{jh}(t)$, $z_h(t)$ is an indicator if subrea $A_h$ is currently disconnected from the GS, $z_{jh}(t)$ is an indicator if cluster $\mathcal{D}_{jh}(t)$ is currently disconnected from the network,  $\Lambda_{jh}(f_{jh}(t), s_h(t), \mathcal{D}_{jh}(t))$ is the time to transmit the information from cluster $\mathcal{N}_{jh}(t)$ to LS $s_h(t)$ and is given by  $\Lambda_{jh}(f_{jh}(t), s_h(t), \mathcal{D}_{jh}(t))=\Lambda(f_{jh}(t), s_h(t))+\max_{n \in \mathcal{D}_{jh}(t)}\Lambda(n,f_{jh}(t))$,  $\Lambda(f_{jh}(t), s_h(t))$ is the time needed to transmit the information from CH $f_{jh}(t)$ to LS $s_h(t)$, $\Lambda(n,f_{jh}(t))$ is the time needed to transmit the information from device $n$ to CH $f_{jh}(t)$, and $\Lambda_g(s_h(t))$ is the time required to deliver the information from $s_h(t)$ to the GS. For any two IoBT nodes $i$ and $j$, $\Lambda(i,j)$ is the single hop delay between $i$ and $j$ and is given by: $\Lambda(i,j)=\frac{m_i}{R_{ij}}$ where $R_{ij}$ is the capacity of the link $(i,j)$ and $m_i$ is the packet size of node $i$.
\vspace{0.2 cm}
\item  If $a_t=a_{d,i}$, $b_t=b_{d,kh'}$, $i \in \mathcal{D}_{h''}(t)$, $h'' \neq h'$,
\vspace{-0.1 cm}
\smaller
\begin{eqnarray}
&&\hspace{-1 cm}S_{D,t}(a_t, b_t)=\sum_{j=1}^M I(i \in \mathcal{D}_{jh''(t)})(x_{ijh''}(t)W_{jh''}(t) +\bar{x}_{ijh''}(t)),  \nonumber
\end{eqnarray}
\vspace{- 0.5  cm}
\begin{eqnarray}
&&\hspace{-1 cm}\Lambda_t(a_t,b_t)=\max_{1 \leq h \leq H} \bar{z}_h(t) I(h \neq h')\max_{1 \leq j \leq M}\bar{z}_{jh}(t)\nonumber\\
&&\hspace{-0.8 cm}\times (I(i \notin \mathcal{D}_{jh}(t)) (\Lambda_{jh}(f_{jh}(t),s_{h}(t),D_{jh}(t))\nonumber\\
&&\hspace{-0.8 cm}+I(i \in \mathcal{D}_{jh}(t)) I(f_{jh}(t) \neq i) (\Lambda_{jh}(f_{jh}(t),s_{h}(t),D_{jh}(t) \setminus \{i\})\nonumber\\
&&\hspace{-0.8 cm}+\Lambda_g(s_{h}(t)))+I(h=h')\max_{1 \leq j \leq M} I(j \notin \mathcal{H}_k)(f_{jh'}(t),s_h'(t),\mathcal{D}_{jh'}(t))\nonumber\\
&&\hspace{-0.8 cm}+ I(j \in \mathcal{H}_k)\Lambda_{jh'}(f_{jh'}(t),s_{h'}(t),\mathcal{D}^+_{jh'}(t))+\Lambda_g(s_{h'}(t))),\nonumber
\end{eqnarray}
\normalsize
where $\mathcal{D}_{h}(t)$ is the set of devices in subarea $A_h$, $\mathcal{D}^+_{jh}(t)=\mathcal{D}_{jh}(t)\cup\{N(t)+1\}$, and $N(t)+1$ is the index of the newly deployed device and $N(t)$ is the total number of devices.
\vspace{0.3 cm}
\item If $a_t=a_{d,i}$, $b_t=b_{d,kh'}$, $i \in \mathcal{D}_{h'}(t)$,
\smaller
\begin{eqnarray}
&&\hspace{-1 cm}S_{D,t}(a_t, b_t)=\sum_{j=1}^M I(i \in \mathcal{D}_{jh'}(t), j \notin \mathcal{H}_k)x_{ijh'}(t)W_{jh'}(t)\nonumber\\
&&  \hspace{0.8 cm}+ \hspace{0.1 cm}I(i \in \mathcal{D}_{jh'}(t),j \in \mathcal{H}_k)x_{ijh'}(t)(W_{jh'}(t)+1)\nonumber\\
&&  \hspace{0.8 cm}+ \hspace{0.1 cm}\bar{x}_{ijh'}(t), \nonumber
\end{eqnarray}
\vspace{-0.5 cm}
\begin{eqnarray}
&&\hspace{-1 cm}\Lambda_t(a_t, b_t)=\max_{1 \leq h \leq H} I(h=h') \bar{z}_h(t)\max_{1\leq j \leq M} I(i \in \mathcal{D}_{jh'}(t)) I(f_{jh'}(t) \neq i)\nonumber\\
&& \hspace{1 cm}\times (I(j \in \mathcal{H}_k)\bar{z}_{jh}(t)\Lambda_{jh'}(f_{jh'}(t),s_{h'}(t),\mathcal{D}^+_{jh}(t) \setminus \{i\})\nonumber\\
&&\hspace{1 cm}+(I(j \notin \mathcal{H}_k)\bar{z}_{jh}(t)\Lambda_{jh'}(f_{jh'}(t),s_{h'}(t),\mathcal{D}_{jh}(t)\setminus \{i\})) \nonumber\\
&&\hspace{1 cm}+I(h \neq h')\max_{1 \leq j \leq M}\bar{z}_{jh}(t) \Lambda_{jh}(f_{jh}(t),s_{h}(t),\mathcal{D}_{jh}(t)) \nonumber\\
&&\hspace{1 cm}+\ \Lambda_g(s_h(t)). \nonumber
\end{eqnarray}
\normalsize
\item If $a=a_{d,i}$, $b_t=b_{a,kh''}$,
\smaller
\begin{eqnarray}
&&\hspace{-1 cm}S_{D,t}(a_t,b_t)=\sum_{h=1}^H \sum_{j=1}^{M}I(i \in \mathcal{D}_{jh}(t))(x_{ijh}(t)W_{jh}(t)+\bar{x}_{ijh}(t)) \nonumber\\
&&\hspace{1.6 cm}+I( h=h'')(-z_{h''}(t)W_{h''}(t)), \nonumber
\end{eqnarray}
\begin{eqnarray}
&&\hspace{-1 cm}\Lambda_t(a_t, b_t)=\max_{1 \leq h \leq H} (I(h \neq h'')\bar{z}_h(t)+I(h=h''))\nonumber\\
&&\hspace{0.3 cm}\times\max_{1 \leq j \leq M}I(i \in \mathcal{D}_{jh}(t))I(f_{jh}(t)\neq i)\nonumber\\
&&\hspace{0.3 cm}\times\Lambda_{jh}(f_{jh}(t),s_{h}(t),\mathcal{D}_{jh}(t)\setminus \{i\}))\nonumber\\
&&\hspace{0.3 cm}-I(i \notin \mathcal{D}_{jh}(t))\Lambda_{jh}(f_{jh}(t),s_{h}(t),\mathcal{D}_{jh}(t)) + \Lambda_g(s_h(t)), \nonumber
\end{eqnarray}
\normalsize
where $W_h(t)$ is given by: \small $W_h(t)=\sum_{i \in \mathcal{D}_h(t)}w_i+w_{L,s_h(t)}$.
\normalsize
\item If $a_t=a_{d,i}$, $b_t=b_{L,h''}$,
\smaller
\begin{eqnarray}
&&\hspace{-1 cm}S_{D,t}(a_t, b_t)=\sum_{h=1}^H\sum_{j=1}^M I(i \in \mathcal{D}_{jh}(t))(x_{ijh}(t)W_{jh}(t) +\bar{x}_{ijh}(t)),  \nonumber
\end{eqnarray}
\vspace{-0.3 cm}
\begin{eqnarray}
&&\hspace{-0.5 cm}\Lambda_t(a_t, b_t)=\max_{1 \leq h \leq H}\bar{z}_h(t)\max_{1 \leq j \leq M}I(i \in \mathcal{D}_{jh})I(f_{jh}(t)\neq i)\nonumber\\
&&\hspace{1 cm} \times\Lambda_{jh}(f_{jh}(t),s_{h}(t),\mathcal{D}_{jh}(t)\setminus \{i\}))\nonumber\\
&&\hspace{1 cm}-I(i \notin \mathcal{D}_{jh})\Lambda_{jh}(f_{jh}(t),s_{h}(t),\mathcal{D}_{jh}(t))) + \Lambda_g(s_h(t))).\nonumber
\end{eqnarray}
\normalsize
\item If $a_t=a_{L,mh'}$, $b_t=b_{a,kh''}$, $h'=h''$,\\
\vspace{-0.3 cm}
\smaller
\begin{eqnarray}
&&\hspace{-4.8 cm}S_{D,t}(a_t, b_t)=-z_{h''}(t)W_{h''}(t),\nonumber
\end{eqnarray}
\begin{eqnarray}
&& \hspace{-0.8 cm}\Lambda_t(a_t, b_t)=\max_{1 \leq h \leq H} I(h=h')\hspace{-0.1 cm}\max_{1 \leq j \leq M}\hspace{-0.2 cm}\Lambda_{jh'}(f_{jh'}(t),k,\mathcal{D}_{jh'}(t))\nonumber\\
&& \hspace{0.5 cm}+ I(h \neq h')\bar{z}_{h}(t)(\max_{1 \leq j \leq M}\bar{z}_{jh}(t)\Lambda_{jh}(f_{jh}(t),s_h(t),\mathcal{D}_{jh}(t))\nonumber\\
&& \hspace{0.5 cm}+I(h=h') \Lambda_g(k)+I(h\neq h')\bar{z}_h(t) \Lambda_g(s_h(t)).\nonumber
\end{eqnarray}
\normalsize
\item If $a_t=a_{L,mh'}$, $b_t=b_{a,kh''}$, $h' \neq h''$,
\smaller
\begin{eqnarray}
&&\hspace{-1 cm}S_{D,t}(a_t, b_t)=y_{mh'}(t)W_h(t)+\bar{y}_{mh}(t)w_{L,m}-z_{h''}(t)W_{h''}(t),\nonumber
\end{eqnarray}
\vspace{-0.5 cm}
\begin{eqnarray}
&& \hspace{-1 cm}\Lambda_t(a_t, b_t)=\max_{1 \leq h \leq H}( I(s_{h'}(t)\neq m,h=h') +I(h \neq h' ))\nonumber\\
&& \times \bar{z}_{h}(t)(\Lambda_g(s_h(t))+\max_{1 \leq j \leq M}\bar{z}_{jh}(t)\Lambda_{jh}(f_{jh}(t),s_{h}(t),\mathcal{D}_{jh}(t)))\nonumber\\
&& +I(h=h'')(\max_{1 \leq j \leq M}\bar{z}_{jh''}(t)\Lambda_{jh''}(f_{jh''}(t),k,\mathcal{D}_{jh''}(t))+\Lambda_g(k)), \nonumber
 \end{eqnarray}
\normalsize
where $y_{mh}(t)$ is an indicator that LS $m$ is the activated LS in subarea $A_h$.
\\
\item If $a_t=a_{L,mh'}$, $b_t=b_{L,h''}$,\\
\smaller
\begin{equation}
\hspace{-2.4 cm}S_{D,t}(a_t, b_t)=y_{mh'}(t)W_{h'}(t)+\bar{y}_{mh'}w_{L,m}, \nonumber
\end{equation}
\vspace{-0.5 cm}
\begin{eqnarray}
&&\hspace{-1 cm}\Lambda_t(a_t,b_t)=\max_{1 \leq h \leq H}\bar{z}_h(t)(I(s_{h'}(t) \neq k, h=h')+I( h \neq h'))\nonumber\\
&&\times \max_{1 \leq j \leq M}(\bar{z}_{jh}(t)\Lambda_{jh}(f_{jh}(t),s_{h}(t),D_{jh}) + \Lambda_g(s_h(t)).\nonumber
\end{eqnarray}
\normalsize
\item If $a_t=a_{L,mh'}$, $b_t=b_{d,kh''}$, $h' \neq h''$,
\smaller
\begin{eqnarray}
&&\hspace{-3 cm}S_{D,t}(a_t, b_t)=y_{mh'}(t)W_{h'}(t) +\bar{y}_{mh'}(t)w_{L,m}, \nonumber
\end{eqnarray}
\begin{eqnarray}
&&\hspace{-1 cm}\Lambda_t(a_t,b_t)=\max_{1 \leq h \leq H} (I(s_{h'}(t) \neq m, h=h')+(h \neq h', h \neq h'')) \nonumber\\
&&\hspace{-0.5 cm}\times \bar{z}_h(t)\sum_{j =1}^M\bar{z}_{jh}(t)(\Lambda_{jh}(f_{jh}(t),s_{h}(t),\mathcal{D}_{jh}(t)) \nonumber\\
&&\hspace{-0.5 cm}+I(h=h''))\bar{z}_h(t)(\max_{1 \leq j \leq M}I(j \in \mathcal{H}_k)\bar{z}_{jh}(t)\Lambda_{jh}(f_{jh}(t),s_{h}(t),\mathcal{D}^+_{jh}(t))\nonumber\\
&&\hspace{-0.5 cm}+ I(j \notin \mathcal{H}_k)\bar{z}_{jh}(t)\Lambda_{jh}(_{jh}(t),s_{h}(t),\mathcal{D}_{jh}(t)) + \Lambda_g(s_h(t))). \nonumber
\end{eqnarray}
\normalsize
\item If $a_t=a_{L,mh'}$, $b_t=b_{d,kh''}$, $h' = h''$,
\smaller
\begin{eqnarray}
&&\hspace{-2 cm}S_{D,t}(a_t, b_t)=y_{mh}(t)(W_{h'}(t)+N_k)+\bar{y}_{mh'}(t)w_{L,m}, \nonumber
\end{eqnarray}
\begin{eqnarray}
&&\hspace{-1 cm}\Lambda_t(a_t,b_t)=\max_{1 \leq h \leq H}(I(s_h'(t) \neq m, h=h')\bar{z}_h(t) \nonumber\\
&&\hspace{-0.6 cm}\times (\max_{1 \leq j \leq M}I(j \in \mathcal{H}_k)\bar{z}_{jh}(t)\Lambda_{jh}(f_{jh}(t),s_{h}(t),\mathcal{D}^+_{jh}(t))\nonumber\\
&&\hspace{-0.6 cm}+I(j \notin \mathcal{H}_k)\bar{z}_{jh}(t)\Lambda_{jh}(f_{jh}(t),s_{h}(t),\mathcal{D}_{jh}(t)+\Lambda_g(s_h(t))))\nonumber\\
&&\hspace{-0.6 cm}+I(h\neq h')\bar{z}_h(t)(\max_{1 \leq j \leq M}\bar{z}_{jh}(t)\Lambda_{jh}(f_{jh}(t),s_{h}(t),\mathcal{D}_{jh}(t))+\Lambda_g(s_h(t))).\nonumber
\end{eqnarray}
\normalsize
\item If $a_t=a_{L,mh'}$, $b_t=b_{c,kj'h''}$, 
\smaller
\begin{equation}
\hspace{-3.2 cm}S_{D,t}(a_t, b_t)=y_{mh'}(t)W_{h'}(t)+\bar{y}_{mh'}(t)w_{L,m}, \nonumber
\end{equation}
\begin{eqnarray}
&&\hspace{-1.3 cm}\Lambda_t(a_t, b_t)=\max_{1 \leq h \leq H}(I(s_h(t) \neq m, h=h')+I(h \neq h')) \nonumber\\
&&\times \bar{z}_h(t)(\max_{1 \leq j \leq M}I(h=h'',j=j') \Lambda_{jh}(k,s_h(t),\mathcal{D}_{jh}(t))\nonumber\\
&&+(1-I(h=h'',j=j'))\bar{z}_{jh}(t)\Lambda_{jh}(f_{jh}(t),s_{h}(t),\mathcal{D}_{jh}(t))\nonumber\\
&&+\Lambda_g(s_h(t))).\nonumber
\end{eqnarray}
\normalsize
The expressions of $C_{a,t}(a_t,b_t)$, $C_{d,t}(a_t,b_t)$ and  $U_{d,t}(a_t,b_t)$  in terms of the pure strategies of the attacker and the defender are given as follows:
\smaller
\[
  \hspace{-2 cm}  C_{a,t}(a_t,b_t)= 
\begin{cases}
    c_i,& \text{if } a_t=a_{d,i},\\ 
    c_{L,k},              & \text{if } a_t=a_{L,kh}.
\end{cases}
\]
\[
  \hspace{-2 cm}  C_{d,t}(a_t,b_t)= 
\begin{cases}
    d_i,& \text{if } b_t=b_{d,ih},\\
    d_L,              & \text{if } b_t=b_{L,h},\\
0, &\text{otherwise}.
\end{cases}
\]
\normalsize
The expression of the defender's utility is given by
\smaller
\[
  \hspace{-2 cm}  U_{d,t}(a_t,b_t)= 
\begin{cases}
    u_{i},& \text{if } b_t=b_{d,ih},\\
    u_L,              & \text{if } b_t=b_{L,h},\\
0, &\text{otherwise}.
\end{cases}
\]
\normalsize

\end{itemize}
\end{document}